\setlist[itemize,1]{nosep}
\renewcommand{\mid}{:}
\renewcommand\bibsection%
\newtheorem{theorem}{Theorem}
\newtheorem{corollary}[theorem]{Corollary}
\newtheorem{lemma}[theorem]{Lemma}
\newtheorem{proposition}[theorem]{Proposition}
\newtheorem{definition}[theorem]{Definition}
\DeclareMathOperator*{\argmax}{\arg\!\max}
\newcommand{\R}{\mathbb{R}}
\renewcommand{\P}{\mathcal{P}}
\title{Sharing Non-Anonymous Costs of Multiple Resources Optimally}
\author{Max Klimm\thanks{Department of Mathematics, Technische Universit\"at Berlin, Stra\ss e des 17.~Juni 136, 10623~Berlin, Germany, \texttt{klimm@math.tu-berlin.de}.} \and Daniel Schmand\thanks{RWTH Aachen University, Kackertstra\ss e 7, 52072~Aachen, Germany, \texttt{daniel.schmand@oms.rwth-aachen.de}. Most of the work was done while this author was at Technische Universit\"at Berlin.}}
\begin{document}
\maketitle

\begin{abstract}
In cost sharing games, the existence and efficiency
of pure Nash equilibria fundamentally depends on the method that is used to share the resources' costs. We consider a general class of resource allocation problems in which
a set of resources is used by a heterogeneous set of selfish users. The cost of a resource is a (non-decreasing) function of the \emph{set} of its users.
%The set-dependency of the cost functions allows to model different technologies at the resources required by different users, such as machines, bandwidth, personal, etc.
 %
Under the assumption that the costs of the resources are shared by \emph{uniform} cost sharing protocols, i.e., protocols that use only local information
of the resource's cost structure and its users to determine the cost shares, we exactly quantify the inefficiency of the resulting
pure Nash equilibria. Specifically, we show tight bounds on prices of stability and anarchy for games with only submodular and only
supermodular cost functions, respectively, and an asymptotically tight bound for games with arbitrary set-functions. While all our upper bounds are attained for the
well-known Shapley cost sharing protocol, our lower bounds hold for arbitrary uniform cost sharing protocols and are even valid for games with anonymous costs, i.e.,
games in which the cost of each resource only depends on the cardinality of the set of its users. 
%For the price of stability and price of anarchy in games with only submodular or only supermodular cost functions
%and the price of anarchy in games with arbitrary cost functions we show that there is no uniform protocol, that
%guarantees better bounds than the Shapley cost sharing protocol.
\end{abstract}

\section{Introduction}
Resource allocation problems are omnipresent in many areas of economics, computer science, and operations research with many applications,
e.g., in routing, network design, and scheduling. Roughly speaking, when solving these problems the central question is how to allocate a given set of resources
to a set of potential users so as to optimize a given social welfare function. In many applications, a major issue is that the users of the system are striving
to optimize their own private objective instead of the overall performance of the system. Such systems of users with different
objectives are analyzed using the theory of non-cooperative games.

A fundamental model of resource allocation problems with selfish users are congestion games \citep{Rosenthal73a}.
In a congestion game, each user chooses a subset of a given set of resources out of a set of allowable subsets.
The cost of each resource depends on the number of players using the particular resource. The private cost of each user equals the sum of costs given by the cost functions of the
resources contained in the chosen subset. Congestion games can be interpreted as cost sharing games with fair cost allocation in which the cost of a resource is a function of the
number of its users and each user pays the same cost share,
see also \citet{Anshelevich08}. \citeauthor{Rosenthal73a} showed that every congestion game has a pure Nash equilibrium, i.e., a strategy vector such that no player
can decrease its cost by a unilateral change of her strategy. 

%\iffalse
This existence result depends severely on the assumption that the players are identical in the sense that they contribute
in equal terms to the congestion---and, thus, the cost---on the resources, which is unrealistic in many applications. As a better model for
heterogeneous users, \citet{Milchtaich96} introduced \emph{weighted} congestion games, where each player is associated
with a positive weight and resource costs are functions of the aggregated weights of their respective users. It is well known that
these games may lack a pure Nash equilibrium \citep{Fotakis05,Goemans05,Libman01}. A further generalization of weighted congestion games with even more modeling
power are congestion games with set-dependent cost functions introduced by \citet{Fabrikant04}. Here, the cost of each resource is a (usually non-decreasing) function of
the \emph{set} of its users. Set-dependent cost function can be used to model multi-dimensional cost structures on the resources that may arise form different
technologies at the resources required by different users, such as bandwidth, personal, machines, etc. 

%\fi
For the class of congestion games with set-dependent costs,~\citet{Gopalakrishnan13} precisely characterized how the resources' costs can be distributed among its users such that the existence of a pure
Nash equilibrium is guaranteed. Specifically, they showed that the class of generalized weighted Shapley protocols is the unique maximal class of cost sharing
protocols that guarantees the existence of a pure Nash equilibrium in all induced cost sharing games. This class of protocols is quite rich as it contains,
e.g., weighted versions of the Shapley protocol and ordered protocols as considered by \citet{ChenRV10}.
\citeauthor{ChenRV10} examined which protocols give rise to good equilibria for cost sharing games where each resource has a fixed cost that has to be paid if the
resource is used by at least one player. In subsequent work, von Falkenhausen and Harks~\cite{Falkenhausen13} gave a complete characterization of the prices of anarchy and stability that is achievable by cost
sharing protocols in games with weighted players and matroid strategy spaces. In this paper, we follow this line of research asking which cost sharing protocols
guarantee the existence of good equilibria for \emph{arbitrary} (non-decreasing) set-dependent cost functions and \emph{arbitrary} strategy spaces.
\paragraph{Our Results.}
We study cost sharing protocols for a general resource allocation model with set-dependent costs that guarantee the existence of efficient pure Nash equilibria.
Our results are summarized in Table~\ref{tab:results}. Specifically, we give tight and asymptotically tight bounds on the inefficiency of Nash equilibria
in games that use the Shapley protocol both in terms of the price of anarchy and the price of stability and for games with submodular, supermodular and arbitrary
non-decreasing cost functions, respectively. 
%These classes of cost functions are important for diminishing returns or traffic-congestion, for example.
The lower bounds that we provide hold for \emph{arbitrary} uniform cost sharing protocols and even in network games with anonymous costs
in which the cost of a resource depends only of the cardinality of the set of its users. Our upper and lower bounds are exactly matching except for the price
of stability in games with arbitrary non-decreasing cost functions where they only match asymptotically. Nonetheless, the lower bound of $\Theta(nH_n) = \Theta(n \log n)$
for any uniform protocol is our technically most challenging result and relies on a combination of characterizations of stable protocols
for constant cost functions taken from \cite{ChenRV10} and set-dependent cost functions from \cite{Gopalakrishnan13}. Our results imply
that both for submodular and supermodular costs there is no
other uniform protocol that gives rise to better pure Nash equilibria than the Shapley protocol, in the worst case. As another interesting
side-product of our results, we obtain that moving from anonymous costs (that depend only on the number of users) to set-dependent costs
does not deteriorate the quality of pure Nash equilibria in cost sharing games.
\newcommand{\scs}{\scriptsize}
\begin{table}[tb]
\caption{\label{tab:results}The inefficiency of $n$-player cost sharing games.}
\begin{center}
\begin{tabular*}{\linewidth}{
@{}l
@{\extracolsep{\fill}} c
@{\extracolsep{2ex}}   c
@{\extracolsep{2ex}}   c
@{\extracolsep{5ex}}   c
@{\extracolsep{2ex}}   c
@{\extracolsep{2ex}}   c
}
\toprule
\multirow{2}{*}{Cost Functions} & \multicolumn{3}{c}{Price of stability} & \multicolumn{3}{c}{Price of anarchy}\\
& \scs Value & \scs Lower bd. & \scs Upper bd. & \scs Value & \scs Lower bd. & \scs Upper bd.\\
\midrule
submodular & $H_n$ & \scs\citep{Anshelevich08} & \scs Thm.~\ref{thm_pos_shapley_setDependent_submod_upper} & $n$ &
\scs\citep{ChenRV10} & \scs Thm.~\ref{thm_poa_shapley_setDependent_submod_upper}\\
supermodular & $n$ & \scs Pro.~\ref{pro_uniform_supermod_lower} & \scs
Thm.~\ref{thm_pos_shapley_setDependent_supermod_upper} & $\infty$ &
\scs\citep{Falkenhausen13}, Thm.~\ref{thm_uniform_poa_supermod_lower} & --\\
arbitrary & $\Theta(nH_n)$ & \scs Thm.~\ref{thm_uniform_arbitrary_lower} & \scs
Thm.~\ref{thm_pos_shapley_setDependent_arb_upper} & $\infty$ &
\scs\citep{Falkenhausen13}, Thm.~\ref{thm_uniform_poa_supermod_lower} & --\\
\bottomrule
\end{tabular*}
\end{center}
\end{table}
\paragraph{Related Work.}
\iffalse
\citet{Rosenthal73a} showed that every unweighted congestion game admits a pure
Nash equilibrium. \citet{Milchtaich96} studied generalizations with
player-specific cost functions or weighted players. Both generalizations admit a pure Nash equilibrium if all strategies consist of a single resource only. This positive result was generalized by
\citet{Ackermann09} to games in which the set of strategies of each player corresponds to the set of bases of a matroid.
Weighted congestion games with proportional cost sharing need not possess a pure Nash equilibrium \cite{Fotakis05,Goemans05,Libman01}.
\citeauthor{Fotakis05} showed that weighted congestion games with affine cost functions
always have a pure Nash equilibrium. Indeed, also weighted congestion games with exponential costs have a pure Nash
equilibrium \cite{Panagopoulou06}, and affine and exponential cost functions are essentially the only sets of cost
functions that guarantee the existence of a pure Nash equilibrium \cite{Harks:existence}. \citet{Fabrikant04}
remarked that singleton congestion games with set-dependent costs always have a pure Nash equilibrium. This has been observed
independently and in a more general context by \citet{Kukushkin04working}.
\fi
To measure the inefficiency of equilibria, two notions have
evolved. The price of anarchy \cite{Koutsoupias99,Papadimitriou01} is the worst case ratio of the social
cost of an equilibrium and that of a social optimum. The price of stability \cite{Schulz03,Anshelevich08} is
the ratio of the social cost of the most favorable equilibrium and that of the social optimum.

\citet{ChenRV10} initiated the study of cost sharing protocols that guarantee the existence of efficient pure Nash
equilibria. They considered the case of constant resource costs and characterized the set of linear uniform protocols. For uniform protocols (that solely depend on local information) they showed that proportional cost sharing
guarantees a price of stability equal to the $n$-th harmonic number and a price of anarchy of $n$ for $n$-player games.
Von Falkenhausen and Harks~\cite{Falkenhausen13} studied cost sharing protocols for weighted congestion games with matroid strategy spaces. They gave various tight
bounds for the prices of anarchy and stability that is achievable by uniform and more general protocols. Among other
results, they showed that even for a bounded number of players, no uniform protocol can archive a constant price of
anarchy. \citet{Kollias11} studied weighted congestion games with arbitrary strategy spaces. They recalled a result from
\citet{Hart89} to deduce that every weighted congestion game has a pure Nash equilibrium if the cost of each resource is
distributed according to the (weighted) Shapley value. Furthermore, bounds on the price of stability for this cost sharing
protocol are given. \citet{Gopalakrishnan13} considered congestion games with arbitrary set-dependent cost functions. They give a
full characterization of the cost sharing protocols that guarantee the existence of a pure Nash equilibrium in all such
games. 

Independently of our work, \citet{Roughgarden14} showed that the price of stability of Shapley cost sharing games is $H_n$ provided
that the cost of each resource is a submodular function of its users. They further showed that the strong price of anarchy is $H_n$ as
well. In forthcoming work, \citet{Gkatzelis14} examined optimal cost sharing rules for weighted congestion games with polynomial cost
functions. In particular, they showed that among the set of weighted Shapley cost sharing methods, the best price of anarchy can be
guaranteed by unweighted Shapley cost sharing. Although in a similar vein, their result is independent from ours since our results
hold even for unweighted players and arbitrary (submodular, supermodular, or arbitrarily non-decreasing) costs, while their result
holds for weighted players and convex and polynomial costs. Yet, we believe that the combination of the results of \citeauthor{Gkatzelis14}
and ours give strong evidence that in multiple scenarios of interest, Shapley cost sharing is the best way to share the cost among selfish
users. Both papers thus contribute to the \emph{quantitative} justification of Shapley cost sharing, as opposed to the \emph{axiomatic}
justification originally proposed by \citet{Shapley53}.

\section{Preliminaries}
We are given a finite set of \emph{players} $N = \{1,\dots,n\}$ and a finite and non-empty set of \emph{resources} $R$. For each player
$i$, the set of \emph{strategies} $\mathcal{P}_i$ is a non-empty set of subsets of $R$. We set $\P = \P_1 \times \dots
\times \mathcal{P}_n$ and call $P=(P_1,\dots,P_n) \in \P$ a \emph{strategy vector}. For $P \in \P$ and $r \in R$, let
$P^r =\{i \mid r\in P_i\}$ denote the set of players that use resource $r$ in strategy vector~$P$. For each resource~$r$, we are given a non-negative
cost function $C^r: 2^N \rightarrow \mathbb{R}_{\geq 0}$ mapping the set of its users to the respective cost value.
We assume that all cost functions~$C^r$ are non-decreasing, in the sense that $C^r(T) \leq C^r(U)$ for all $T \subseteq U$ with
$T,U \in 2^N$, and that $C^r(\emptyset) = 0$. The function $C^r$ is \emph{submodular} if $C^r(X \cup \{i\}) - C^r(X) \geq C^r(Y \cup \{i\}) - C^r(Y)$
for all $X \subseteq Y \subseteq N$, $i \in N \setminus Y$ and \emph{supermodular} if  $C^r(X \cup \{i\}) - C^r(X) \leq C^r(Y \cup \{i\}) - C^r(Y)$ for
all $X \subseteq Y \subseteq N$, $i \in N \setminus Y$. We call $C^r$ \emph{anonymous} if $C^r(X) = C^r(Y)$ for all $X,Y \subseteq N$ with $|X|=|Y|$.
For anonymous cost functions, submodularity and supermodularity are equivalent to concavity and convexity, respectively.

The tuple $\mathcal{M} = (N,R,\P,(C^r)_{r \in R})$ is called a \emph{resource allocation model}. For the special case that $R$ corresponds to the
set of edges of a graph and for every~$i$ the set $\P_i$ corresponds to the set of $(s_i,t_i)$-paths for two designated vertices $s_i$ and $t_i$,
we call $\mathcal{M}$ a \emph{network} resource allocation model.
 
The players' private costs are governed by local cost sharing protocols that decide how the cost of each resource is divided among its users.
%
%Player $i$ has to pay $C_i(P) = \sum_{r \in P_i}{c_i^r(S_r)}$, where $c_i^r(\cdot)$ is the \emph{cost share} of player
%$i$ for resource $r$. More definitions on the cost shares can be found in Section \ref{sec_CostSharingProtocols}.
%
%For a resource~$r$ and a strategy vector $P$, we denote the cost share that player $i$ has to pay by $c_i^r(P)$.
A cost sharing protocol is a family of functions $(c_i^r)_{i \in N, r \in R} : \mathcal{P} \to \R$ that determines for each resource~$r$ and each
player~$i$, the cost share $c_i^r(P)$ that player~$i$ has to pay for using resource $r$ under strategy vector $P$. The private cost of player~$i$
under strategy vector $P$ is then defined as $C_i(P) = \sum_{r \in
P_i}{c_i^r(P)}$.
%The \emph{social cost} of a strategy vector $P$ is defined as $C(P) = \sum_{r \in R}{C^r(P^r)}$.
%A strategy vector $\hat{P}$ that minimizes $C$ is called an \emph{optimal solution}.

A resource allocation model $\mathcal{M} = (N,R,\mathcal{P},(C^r)_{r \in R})$ together with a cost sharing protocol $(c^r_i)_{i \in N,r \in R}$
thus defines a strategic game $G$ with player set $N$, strategy space $\mathcal{P}$ and private cost functions $(C_i)_{i \in N}$. For a strategic
game $G$, let $\text{PNE} \subseteq \P$ denote the set of pure Nash equilibria of $G$. The price of anarchy of $G$ is then defined as
$\smash{\max_{P \in \text{PNE}(G)} C(P)/C(\hat{P})}$, and the price of stability is defined as $\smash{\min_{P \in \text{PNE}(G)} C(P)/C(\hat{P})}$,
where $C(P) = \sum_{r\in R}{C^r(P^r)}$ is the \emph{social cost} of a strategy vector $P$ and $\smash{\hat{P}}$ is a strategy vector that minimizes
$C$. The strategy vector $\hat{P}$ is called \emph{socially optimal}.

Throughout this paper, we impose the following
assumptions on the cost sharing protocol that have become standard in the literature, cf. \citep{ChenRV10,Falkenhausen13}:
\begin{itemize}[label=--]
\item \emph{Stability:} There exists at least one pure Nash equilibrium, i.e, there is $P \in \P$ such that $C_i(P) \leq C_i(\tilde{P}_i,P_{-i})$ for
all $i \in N$ and $\tilde{P}_i \in \P_i$. 
\item \emph{Budget-balance:} The cost of all used resources is exactly covered by the cost shares of its users, i.e.,
$\sum_{i \in P^r} c_i^r(P) = C^r(P^r)$ and $c_i^r(P) = 0$ for all $i \notin P^r$ for all $r \in R$ and $P \in
\mathcal{P}$.
%\item \emph{Separability:} The cost shares of a resource depend only on the set of its users, i.e., $c_i^r(P) =
%c_i^r(\tilde{P})$ for all $i \in N$, $r \in R$ and $P,\tilde{P} \in \mathcal{P}$ with $P^r = \tilde{P}^r$.
%\end{itemize}
%In this work we only consider admissible protocols, so the cost sharing method of a resource does in fact depend only
%on the set of using players $P^r$.
% We abuse notation and write $c_i^r(T)$ instead of $c_i^r(P)$ for all $P$ with $P^r = T$.
%\begin{itemize}[label=--]
\item \emph{Uniformity:} The cost shares depend only on the cost structure of the resource and the set of users, i.e., $c_i^r(P) = c_i^{\tilde{r}}(\tilde{P})$ for all $i \in N$ and all
resource allocation models $(N,\P,(C^r)_{r \in R})$, $(N,\tilde{\P},(\tilde{C}^{\tilde{r}})_{\tilde{r} \in
\tilde{R}})$ with ${C}^r \equiv {\tilde{C}}^{\tilde{r}}$ and all strategy vectors $P \in \P$, $\tilde{P} \in \tilde{\P}$ with $P^r = \tilde{P}^r$.
\end{itemize}
With some abuse of notation, we henceforth write $c_i^r(P^r)$ instead of $c_i^r(P)$.

In this work, we only consider cost sharing protocols that satisfy these assumptions but it is worth noting that, except for the lower bound of $\Theta(nH_n)$ on
the price of stability for arbitrary non-decreasing cost functions, all our results are valid for a slightly weaker notion of uniformity where the cost shares may
depend on the identity of the resource.

A protocol that is stable, budget-balanced and uniform is the \emph{Shapley cost sharing} protocol. To give a formal definition,
for a set $S \subseteq N$ of players, let us denote by $\Pi(S)$ the set of permutations $\pi : S \to \{1,\dots,|S|\}$. Let $\pi \in \Pi(P^r)$ be a
permutation of the players in $P^r$ and let $P^r_{i,\pi} = \{j \in P^r \mid \pi(j) < \pi(i)\}$ be the set of players that precede player~$i$ in $\pi$.
The \emph{Shapley value} of player $i$ at resource $r$ with users $P^r$ is then defined as
\begin{align}
\label{eq:shapley}
\phi_{i}^{r}(P^r) =
\begin{cases}
 \frac{1}{|P^r|!}\sum\limits_{\pi \in \Pi(P^r)}{C^r(P^r_{i,\pi}\cup \{i\}) - C^r(P^r_{i,\pi})}, & \text{if } i \in P^r,\\
 0 & \text{otherwise,}
\end{cases}
\end{align}
i.e., the Shapley cost share is the marginal increase in the cost due to player~$i$ averaged over all possible permutations of the
players in $P^r$.

The following proposition is an immediate consequence of \eqref{eq:shapley} and well known in the literature. For the sake of completeness, we give a proof here.

\begin{restatable}{proposition}{TrivialPropShapley}
\label{prop_shapley_budgetbalance_separable}
 The Shapley cost sharing protocol is budget-balanced and uniform.
\end{restatable}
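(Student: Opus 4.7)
The plan is to verify both properties directly from the defining formula \eqref{eq:shapley}.

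For budget-balance, I would fix a resource $r$ and a strategy vector $P$, and observe that for every single permutation $\pi \in \Pi(P^r)$ the inner sum telescopes. More precisely, if we order the players of $P^r$ according to $\pi$ as $i_1, i_2, \dots, i_k$ with $k = |P^r|$, then $P^r_{i_j,\pi} = \{i_1,\dots,i_{j-1}\}$ and $P^r_{i_j,\pi} \cup \{i_j\} = \{i_1,\dots,i_j\}$, so
\begin{equation*}
\sum_{i \in P^r} \bigl( C^r(P^r_{i,\pi} \cup \{i\}) - C^r(P^r_{i,\pi}) \bigr) = C^r(P^r) - C^r(\emptyset) = C^r(P^r),
\end{equation*}
using $C^r(\emptyset) = 0$. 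Summing this identity over all $|P^r|!$ permutations and dividing by $|P^r|!$ yields $\sum_{i \in P^r} \phi_i^r(P^r) = C^r(P^r)$, which is budget-balance. The case $i \notin P^r$ is handled by the second branch of the definition, giving $c_i^r(P) = 0$.

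For uniformity, I would simply inspect the right-hand side of \eqref{eq:shapley}: the value $\phi_i^r(P^r)$ depends only on the set $P^r$ of users on resource $r$ and on the cost function $C^r$ restricted to subsets of $P^r$. Hence for any two resource allocation models with $C^r \equiv \tilde C^{\tilde r}$ and any strategy vectors with $P^r = \tilde P^{\tilde r}$, the Shapley shares coincide.

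I do not anticipate any real obstacle here; the proof is essentially a one-line telescoping argument together with a direct reading of the formula. The only thing worth being careful about is indexing the permutation so that the telescoping is visible, which I have done above.
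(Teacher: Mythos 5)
Your proof is correct and follows essentially the same route as the paper: uniformity by direct inspection of \eqref{eq:shapley}, and budget-balance by exchanging the sums over players and permutations so that each permutation contributes a telescoping sum equal to $C^r(P^r)$. You merely make the telescoping step explicit, which the paper leaves implicit.
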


\begin{proof}
Uniformity follows directly from \eqref{eq:shapley}, since the cost functions only depend on $P^r$ and $C^r$. For budget-balance, it is easy to see that
 \begin{align*}
  \sum\limits_{i \in P^r}{\phi_i^r(P^r)} 
  &= \sum\limits_{i \in P^r}{\frac{1}{|P^r|!}\sum\limits_{\pi \in \Pi(P^r)}{C^r(P^r_{i,\pi} \cup \{i\}) - C^r(P^r_{i,\pi})}}\\
  &= \frac{1}{|P^r|!}\sum\limits_{\pi \in \Pi(P^r)}{\sum\limits_{i \in P^r}{C^r(P^r_{i,\pi}\cup \{i\}) - C^r(P^r_{i,\pi})}}\\
%  &= \frac{1}{|P^r|!}|P^r|! \cdot C^r(P^r)\\
  &= C^r(P^r),
 \end{align*}
 which gives the claimed result.
\end{proof}

To show that Shapley cost sharing is also stable, we follow the road taken by \citet{Kollias11},
who gave a potential function for Shapley cost sharing for weighted congestion games. The following result can be proven simply by verifying that each
step in the proof by \citeauthor{Kollias11} applies to set-dependent costs, too.
For the sake of completeness, we give a proof here.

\begin{restatable}{proposition}{propphiExactPot}
\label{prop_phiExactPot}
 Let $\pi \in \Pi(N)$ be arbitrary and let $\Phi : \mathcal{P} \rightarrow \mathbb{R}$ with $P \mapsto \sum_{r\in R}{\Phi^r(P)}$ and $\Phi^r(P) = \sum_{i \in P^r}{\phi_i^r(P^r_{i,\pi} \cup \{i\})}$.
 Then, $\Phi$ is an exact potential function for Shapley cost sharing games with set-dependent costs.
\end{restatable}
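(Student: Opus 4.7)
The plan is to reduce the global potential property to a per-resource identity and then prove that identity by induction. Fix a player $i$ and any unilateral deviation from $P$ to $P' = (\tilde P_i, P_{-i})$. Since $\Phi = \sum_{r \in R} \Phi^r$ and the user set $P^r$ changes only at resources in $P_i \mathbin{\triangle} \tilde P_i$, both $\Phi(P') - \Phi(P)$ and $C_i(P') - C_i(P)$ decompose into sums over the resources that $i$ joins or leaves, with matching signs in the two cases. Hence the whole proposition follows if I can show that for every resource $r$ and every $S \subseteq N$ with $i \notin S$,
\begin{equation*}
\Phi^r(S \cup \{i\}) - \Phi^r(S) \;=\; \phi_i^r(S \cup \{i\}),
\end{equation*}
where by uniformity (Proposition~\ref{prop_shapley_budgetbalance_separable}) I treat $\Phi^r$ as a function of the user set alone.

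I would prove this per-resource identity by induction on $k := |\{j \in S \mid \pi(j) > \pi(i)\}|$, the number of $\pi$-successors of $i$ in $S$. In the base case $k = 0$, every $j \in S$ precedes $i$ in $\pi$, so $(S \cup \{i\})_{j,\pi} = S_{j,\pi}$ for all $j \in S$; hence the only new summand in $\Phi^r(S \cup \{i\})$ compared to $\Phi^r(S)$ is the $j = i$ term $\phi_i^r(S_{i,\pi} \cup \{i\}) = \phi_i^r(S \cup \{i\})$. For the inductive step, let $j^*$ be the $\pi$-last element of $S$, which is also the $\pi$-last element of $S \cup \{i\}$ since $k \geq 1$. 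Separating the $j^*$-term from both sums and using $(S \setminus \{j^*\})_{j,\pi} = S_{j,\pi}$ for every $j \neq j^*$ gives
\begin{align*}
\Phi^r(S) &= \Phi^r(S \setminus \{j^*\}) + \phi_{j^*}^r(S),\\
\Phi^r(S \cup \{i\}) &= \Phi^r\bigl((S \setminus \{j^*\}) \cup \{i\}\bigr) + \phi_{j^*}^r(S \cup \{i\}).
\end{align*}
Applying the induction hypothesis to $S \setminus \{j^*\}$, which has only $k-1$ successors of $i$, reduces the claim to
\begin{equation*}
\phi_i^r(S \cup \{i\}) - \phi_i^r\bigl((S \cup \{i\}) \setminus \{j^*\}\bigr) \;=\; \phi_{j^*}^r(S \cup \{i\}) - \phi_{j^*}^r\bigl((S \cup \{i\}) \setminus \{i\}\bigr).
\end{equation*}

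The hard part is precisely this last equation: it is Myerson's \emph{balanced contributions} property of the Shapley value. I would verify it directly from~\eqref{eq:shapley} by rewriting both sides as weighted sums indexed by subsets $U \subseteq N \setminus \{i, j^*\}$; with the standard Shapley weights, both sides collapse to the same combination of second-order marginal differences $C^r(U \cup \{i, j^*\}) - C^r(U \cup \{i\}) - C^r(U \cup \{j^*\}) + C^r(U)$. This step is purely arithmetic, does not depend on any structural assumption on $C^r$ beyond being a set function, and is the only place where the set-dependent nature of the costs needs to be handled (which is why the argument carries over verbatim from the weighted congestion setting of \citet{Kollias11}). Once balanced contributions is established, the inductive step closes, the per-resource identity follows, and summing over $r \in R$ yields that $\Phi$ is an exact potential.
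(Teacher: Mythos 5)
Your proof is correct, but it takes a genuinely different route from the paper's. The paper first shows that the per-resource potential $\Phi^r$ is independent of the chosen permutation, by expanding it as $\sum_{T \subseteq P^r}\alpha_T\, C^r(T)$ and computing $\alpha_T=(|T|-1)!\,(|P^r|-|T|)!/|P^r|!$ via a counting argument over permutations; with that freedom it then places the deviating player last in $\pi$, whereupon the potential difference telescopes immediately into the deviator's cost difference. You instead keep $\pi$ fixed, decompose over the resources where the deviator's strategy changes, and prove the per-resource identity $\Phi^r(S\cup\{i\})-\Phi^r(S)=\phi_i^r(S\cup\{i\})$ by induction on the number of $\pi$-successors of $i$ in $S$, reducing the inductive step to Myerson's balanced-contributions property of the Shapley value; that property does follow from \eqref{eq:shapley} by the computation you sketch, since both sides equal $\sum_{U} \frac{(|U|+1)!\,(m-|U|-2)!}{m!}\bigl(C^r(U\cup\{i,j^*\})-C^r(U\cup\{i\})-C^r(U\cup\{j^*\})+C^r(U)\bigr)$ with $m=|S\cup\{i\}|$ and $U$ ranging over subsets of $(S\cup\{i\})\setminus\{i,j^*\}$, an expression symmetric in $i$ and $j^*$. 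Your argument is the more structural one: it isolates exactly which property of the Shapley value makes $\Phi$ a potential (this is essentially why the argument of \citet{Kollias11} transfers verbatim to set-dependent costs) and it avoids all permutation counting; your base case, the peeling of the $\pi$-last player $j^*$, and the sign bookkeeping in the resource decomposition are all sound. What the paper's calculation buys, and yours does not, is the explicit closed form of Corollary~\ref{cor:alpha_T}, which is reused later in Lemma~\ref{lem_sum_alphaT_Hk} and the price-of-stability bounds (Theorem~\ref{thm_pos_shapley_setDependent_arb_upper} and its consequences), so on your route that representation would still need a separate derivation. Two minor remarks: the permutation-independence of $\Phi$, proved explicitly in the paper, is only implicit in your argument (it is not needed for the statement), and your appeal to uniformity is superfluous, since $\Phi^r(P)$ depends only on $P^r$, $\pi$ and $C^r$ directly from its definition.
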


\begin{proof}
We first show that the resource potential $\Phi^r(P)$ is independent of the chosen permutation $\pi$. Let us fix $r \in R$ and $\pi \in \Pi(P^r)$. We obtain
 \begin{align}
  \Phi^r(P) &= \sum_{i \in P^r}{\phi_i^r(P^r_{i,\pi} \cup \{i\})}\nonumber \\
  &= \sum_{i \in P^r}{\frac{1}{|P^r_{i,\pi} \cup \{i\}|!}\sum\limits_{p \in \Pi(P^r_{i,\pi} \cup \{i\})}}{\left(C^r(P^r_{i,p}\cup \{i\}) - C^r(P^r_{i,p})\right)}\nonumber \\
  &= \sum_{i \in P^r} \sum_{p \in \Pi(P^r_{i,\pi} \cup \{i\})} \frac{C^r(P^r_{i,p} \cup \{i\}) - C^r(P^r_{i,p})}{|P^r_{i,\pi} \cup \{i\}|!}.
\label{eq_indep_perm}
\end{align}
We observe that \eqref{eq_indep_perm} is a weighted sum of $C^r(T)$ for some sets
 $T \subseteq P^r$. We set
\begin{align*}
\sum_{i \in P^r} \sum_{p \in \Pi(P^r_{i,\pi} \cup \{i\})} \frac{C^r(P^r_{i,p} \cup \{i\}) - C^r(P^r_{i,p})}{|P^r_{i,\pi} \cup \{i\}|!}
 = \sum_{T \subseteq P^r} \alpha_T \cdot C^r(T)
\end{align*}
and proceed to calculate the coefficients $\alpha_T$, $T \subseteq P^r$. To this end, for a fixed set $T$,
 we have to count how often $C^r(T)$ is in the sum of \eqref{eq_indep_perm}. Now let player $t \in T$ be the last player
 of $T$ in $\pi$ and let $l = \pi(t)$. We know that there are only positive contributions to $\alpha_T$ from
 the Shapley value of player $t$ and a permutation $p$, where the players in $T\setminus \{ t\}$ come first,
 are followed by $t$ and all the other players come after $t$. There are $(|T|-1)!(l-|T|)!$ many of these permutations
 and there are $l!$ permutations in the Shapley value of player $t$ in total. That means we have a positive contribution to
$\alpha_T$
 with amount
 \begin{align}
  \label{term_posConti}
   \frac{(|T|-1)!(l-|T|)!}{l!}\text{.}
 \end{align} 
 Now we consider the negative contribution. There is a negative contribution to $\alpha_T$ from the Shapley value of all
players $\pi^{-1}(j)$
 that come after $t$ in $\pi_r$, i.e., $j \in \{l+1,\dots,|P^r|\}$. For a fixed $j$ we have to count the permutations where
all the players of $T$ come first and
 are followed by $\pi^{-1}(j)$ immediately. We know that there are $|T|!(j-|T|-1)!$ of these permutations in total, so the
total negative  contribution for all players $\pi^{-1}(j)$ is
 \begin{align}
  \label{term_negContri}
  \sum_{j=l+1}^{|P^r|}{\frac{|T|!(j-|T|-1)!}{j!}}\text{.}
 \end{align}
 Now we can calculate $\alpha_T$ explicitly by subtracting \eqref{term_negContri} from \eqref{term_posConti}:
 \begin{align}\label{eq_alpha_T}
 %\begin{split}
    \alpha_T &= \frac{(|T|-1)!(l-|T|)!}{l!} - \sum_{j=l+1}^{|P^r|}{\frac{|T|!(j-|T|-1)!}{j!}}\notag\\
    &= (|T|\!-\!1)!\Bigg(\frac{(l-|T|)!}{l!} - \sum_{j=l+1}^{|P^r|}{\frac{|T|(j-|T|-1)!}{j!}}\Bigg)\notag\\
    &= (|T|\!-\!1)!\Big(\frac{(l-|T|)!}{l!} - \sum_{j=l+1}^{|P^r|}{\left(\frac{(j-|T|-1)!}{(j-1)!} - \frac{(j-|T|)!}{j!}\right)}\Big)\notag\\
    &= (|T|\!-\!1)!\Big(\frac{(l-|T|)!}{l!} - \frac{(l+1-|T|-1)!}{(l+1-1)!} + \frac{(|P^r|-|T|)!}{|P^r|!}\Big)\notag\\
    &= (|T|\!-\!1)!\Big(\frac{(|P^r|-|T|)!}{|P^r|!}\Big).
% \end{split}
\end{align}
We derive that $\alpha_T$ is independent of $l$ and, thus, independent of the permutation $\pi$.

To finish the proof, let $P \in \P$, $i \in N$, and $\tilde{P}_i \in \P_i$ be arbitrary. We have shown that when calculating
$\Phi = \sum_{r \in R} \sum_{i \in P^r}{\phi_i^r(P^r_{i,\pi} \cup \{i\})}$ it is without loss of generality to assume that player~$i$
appears last in $\pi$. We then obtain,

 \begin{equation*}
  \begin{split}
   \Phi(P) - \Phi(P_{-i},P_i') &= \sum_{r \in P_i\setminus P_i'}{\phi_i^r(P^r)} - \sum_{r' \in P_i'\setminus P_i}{\phi_{i}^{r'}(P^{r'}\cup \{i\})}\\
   &= C_i(P) - C_i(P_{-i},P_i'),
  \end{split}
 \end{equation*}
which finishes the proof.
\end{proof}

Using that potential games always have a pure Nash equilibrium, we obtain the following immediate corollary.

\begin{corollary}
\label{thm_shapley_stable}
 The Shapley cost sharing protocol is stable.
\end{corollary}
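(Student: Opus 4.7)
The plan is to invoke Proposition~\ref{prop_phiExactPot} together with a standard potential-game argument. Since $N$ and $R$ are finite and each strategy set $\P_i$ is a non-empty finite collection of subsets of $R$, the joint strategy space $\P$ is finite, so the exact potential function $\Phi$ constructed in Proposition~\ref{prop_phiExactPot} attains its minimum. I will take any minimizer $P^* \in \argmin_{P \in \P} \Phi(P)$ and show it is a pure Nash equilibrium under Shapley cost sharing.

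The verification is immediate from the exact-potential identity. For every player $i \in N$ and every alternative strategy $\tilde{P}_i \in \P_i$, Proposition~\ref{prop_phiExactPot} gives
\[
C_i(P^*) - C_i(\tilde{P}_i, P^*_{-i}) = \Phi(P^*) - \Phi(\tilde{P}_i, P^*_{-i}) \leq 0,
\]
where the inequality uses that $P^*$ minimizes $\Phi$. Rearranging yields $C_i(P^*) \leq C_i(\tilde{P}_i, P^*_{-i})$ for all $i$ and all $\tilde{P}_i$, which is precisely the stability condition stated in the preliminaries. Since both the existence of the minimizer and the deviation argument are one-line consequences of Proposition~\ref{prop_phiExactPot} and the finiteness of $\P$, there is no real obstacle to overcome; the corollary is essentially a packaging of the preceding proposition.
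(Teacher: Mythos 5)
Your proposal is correct and follows exactly the route the paper takes: the paper derives the corollary from Proposition~\ref{prop_phiExactPot} via the standard fact that finite potential games possess a pure Nash equilibrium, which is precisely the potential-minimizer argument you spell out.
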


As a side-product of the proof of Proposition~\ref{prop_phiExactPot}, we obtain the following alternative representation of the exact potential function of a Shapley cost sharing game.

\begin{corollary}
\label{cor:alpha_T}
The exact potential function for Shapley cost sharing games can be written as $\Phi(P) = \sum_{r \in R} \sum_{T \subseteq P^r} \alpha_T \cdot C^r(T)$ where $\alpha_T = \frac{(|P^r|-|T|)! \cdot (|T|-1)!}{|P^r|!}$,
and $\alpha_{\emptyset}=0$.
\end{corollary}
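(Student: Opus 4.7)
The plan is to observe that the statement is essentially a restatement of a calculation already performed inside the proof of Proposition~\ref{prop_phiExactPot}. No new argument is needed; one simply reads off the coefficients and packages them as a resource-wise linear combination of cost values.

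Concretely, I fix an arbitrary permutation $\pi \in \Pi(N)$, which is legitimate since Proposition~\ref{prop_phiExactPot} shows $\Phi^r(P)$ is independent of the chosen permutation. Starting from
\begin{align*}
\Phi^r(P) = \sum_{i \in P^r} \phi_i^r(P^r_{i,\pi} \cup \{i\}),
\end{align*}
and expanding each Shapley value by definition~\eqref{eq:shapley}, one obtains a linear combination of values $C^r(T)$ for subsets $T \subseteq P^r$. Grouping terms by $T$ is exactly the computation leading to~\eqref{eq_alpha_T} in the proof of Proposition~\ref{prop_phiExactPot}, and that computation yields
\begin{align*}
\alpha_T = \frac{(|T|-1)!\,(|P^r|-|T|)!}{|P^r|!}
\end{align*}
for every nonempty $T \subseteq P^r$. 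Summing $\Phi^r(P)$ over all $r \in R$ gives the stated representation of $\Phi$.

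The empty set requires only a short remark: the factorial formula does not evaluate at $|T|=0$, but since $C^r(\emptyset) = 0$ by our standing assumption, the coefficient $\alpha_\emptyset$ is immaterial and can be set to $0$ by convention. Thus no genuine obstacle arises; the combinatorial identity was already the heart of Proposition~\ref{prop_phiExactPot}, and the corollary merely records the convenient closed form of $\Phi$ for later use.
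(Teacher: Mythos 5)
Your proposal is correct and matches the paper exactly: the corollary is stated there as a direct side-product of the computation of the coefficients $\alpha_T$ in \eqref{eq_alpha_T} within the proof of Proposition~\ref{prop_phiExactPot}, which is precisely the reading-off you perform. Your remark that $\alpha_\emptyset$ is set to $0$ by convention (harmless since $C^r(\emptyset)=0$) is also consistent with the paper's statement.
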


\section{The Efficiency of Shapley Cost Sharing}
\label{sec:shapley}

Having established the existence of pure Nash equilibria in Shapley cost sharing games we proceed to analyze their efficiency. We start to consider the price of stability.

\subsection{Price of Stability}

We first need the following technical lemma that bounds the coefficients $\alpha_T$, $T \subseteq N$ that occur when writing the potential function as in Corollary~\ref{cor:alpha_T}.

\label{subsubsec_shapley_set-dep_PoS}
 \begin{lemma}
\label{lem_sum_alphaT_Hk}
Let $P \in \P$ and $r \in R$. Then, $\sum_{T \subseteq P^r}{\alpha_T} = H_k$, where $k = |P^r|$ and $H_{k}$ is the $k$-th harmonic number.
\end{lemma}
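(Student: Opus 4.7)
The plan is straightforward: group the subsets $T \subseteq P^r$ by cardinality and exploit the fact that $\alpha_T$ depends only on $|T|$ (and $k = |P^r|$). First I would note that $\alpha_\emptyset = 0$ by definition, so the empty set contributes nothing and we may restrict to $1 \le |T| \le k$. For each fixed $t \in \{1,\dots,k\}$, there are exactly $\binom{k}{t}$ subsets of $P^r$ of size $t$, and by Corollary~\ref{cor:alpha_T} each such subset has coefficient $\alpha_T = \frac{(k-t)!(t-1)!}{k!}$.

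Next I would collect these contributions, so that
\[
\sum_{T \subseteq P^r} \alpha_T \;=\; \sum_{t=1}^{k} \binom{k}{t}\,\frac{(k-t)!(t-1)!}{k!}.
\]
Expanding the binomial coefficient gives $\binom{k}{t} = \frac{k!}{t!(k-t)!}$, and the factorials cancel cleanly: the $(k-t)!$ cancels, and $(t-1)!/t! = 1/t$. This reduces the sum to
\[
\sum_{t=1}^{k} \frac{1}{t} \;=\; H_k,
\]
which is precisely the claim.

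There is no real obstacle here; the lemma is an immediate consequence of the explicit formula for $\alpha_T$ in Corollary~\ref{cor:alpha_T} together with the observation that $\alpha_T$ depends only on $|T|$. The only thing to be careful about is the treatment of the empty set, which is handled by the convention $\alpha_\emptyset = 0$ that keeps the harmonic sum starting at $t=1$.
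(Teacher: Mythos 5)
Your proof is correct and follows essentially the same route as the paper: group subsets by cardinality, use that $\alpha_T$ depends only on $|T|$ with $\alpha_\emptyset = 0$, and cancel the factorials against the binomial coefficient to obtain $\sum_{t=1}^{k} 1/t = H_k$. No differences worth noting.
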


\begin{proof}
Let us fix $P^r \subseteq N$ with $|P^r| = k$. Recall that $\alpha_T = \frac{(|P^r| - |T|)!(|T|-1)!}{|P^r|!}$ and $\alpha_{\emptyset}=0$, thus, $\alpha_T$
does not depend on the set $T$ but only on its cardinality $|T|$. In particular, $\alpha_S = \alpha_T$ for all $S,T \in 2^N$ with $|S| = |T|$.
Defining $\alpha_{l} = \alpha_T$ where $T \subset N$ with $|T| = l$ is arbitrary, we obtain $\alpha_{l} = \frac{(|P^r|-l)!(l-1)!}{|P^r|!}$ and $\alpha_0=0$. We calculate\begin{align*}
  \sum_{T \subseteq P^r} \alpha_T &= \sum_{l=1}^{k} \sum_{\substack{T\subseteq P^r\\|T|=l}}{\alpha_T} = \sum_{l=1}^{k} \sum_{\substack{T\subseteq P^r\\|T|=l}} \alpha_{l} = \sum_{l=1}^{k} \alpha_l \binom{k}{l} = \sum_{l=1}^{k} \frac{(l-1)! (k-l)!}{k!} \binom{k}{l}\\
  &= \sum_{l=1}^{k}{\frac{(l-1)!(k-l)!}{k!} \cdot \frac{k!}{l!(k-l)!}} = \sum_{l=1}^{k}{\frac{1}{l}} = H_k,\\[-3\baselineskip]
\end{align*}
%as claimed.
\end{proof}

We obtain the following upper bound on the price of stability of $n$-player Shapley cost sharing games.

\begin{theorem}
\label{thm_pos_shapley_setDependent_arb_upper}
The price of stability for $n$-player Shapley cost sharing games with set-dependent cost functions is at most $n H_n$.
\end{theorem}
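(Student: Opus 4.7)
The plan is to use the standard potential-function argument for bounding the price of stability via the so-called $(\lambda,\mu)$-relation between the potential $\Phi$ and the social cost $C$, leveraging Corollary~\ref{cor:alpha_T} and Lemma~\ref{lem_sum_alphaT_Hk}.

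Let $P^*$ be a global minimizer of the potential function $\Phi$ from Proposition~\ref{prop_phiExactPot}. Since $\Phi$ is an exact potential, every local minimum (and in particular the global minimum) is a pure Nash equilibrium, so $C(P^*)$ upper-bounds the cost of the best equilibrium. Thus it suffices to show
\[
C(P^*) \;\le\; n\,\Phi(P^*) \;\le\; n\,\Phi(\hat{P}) \;\le\; n H_n\, C(\hat{P}),
\]
where $\hat{P}$ denotes a socially optimal strategy vector. The middle inequality is trivial from the choice of $P^*$; the outer two are the substantive steps.

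For the right inequality, I would invoke Corollary~\ref{cor:alpha_T} to write $\Phi(\hat{P}) = \sum_{r\in R}\sum_{T \subseteq \hat{P}^r} \alpha_T\, C^r(T)$, and use monotonicity of $C^r$ (every $T \subseteq \hat{P}^r$ satisfies $C^r(T) \le C^r(\hat{P}^r)$) together with Lemma~\ref{lem_sum_alphaT_Hk} to factor out $C^r(\hat{P}^r)$ and collapse the inner sum to $H_{|\hat{P}^r|} \le H_n$. This yields $\Phi(\hat{P}) \le H_n\, C(\hat{P})$.

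For the left inequality, I would isolate in the potential the single term corresponding to $T = P^{*r}$. By Corollary~\ref{cor:alpha_T}, its coefficient is $\alpha_{P^{*r}} = \tfrac{0!\,(|P^{*r}|-1)!}{|P^{*r}|!} = \tfrac{1}{|P^{*r}|} \ge \tfrac{1}{n}$. All remaining terms in $\Phi(P^*)$ are non-negative since $\alpha_T \ge 0$ and $C^r \ge 0$, so
\[
\Phi(P^*) \;\ge\; \sum_{r\in R} \alpha_{P^{*r}}\, C^r(P^{*r}) \;\ge\; \frac{1}{n} \sum_{r\in R} C^r(P^{*r}) \;=\; \frac{1}{n}\, C(P^*).
\]
Combining the three inequalities gives $C(P^*) \le nH_n\, C(\hat{P})$, as desired.

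I do not expect any real obstacle: both bounds are quite loose (dropping all non-$P^{*r}$ terms on one side, and bounding every $C^r(T)$ by $C^r(P^r)$ on the other) and rely only on monotonicity of the cost functions and the explicit formula for $\alpha_T$. The only care needed is handling the case $P^{*r} = \emptyset$, where the corresponding summand is simply $0$ and can be ignored on both sides without affecting any inequality.
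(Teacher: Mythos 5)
Your argument is correct and is essentially the paper's own proof: both bound $C$ at the potential minimizer by $n\Phi$ using $\alpha_{P^r}=1/|P^r|$ and non-negativity of the remaining coefficients, then pass to $\Phi(\hat{P})$ and apply monotonicity of $C^r$ together with Lemma~\ref{lem_sum_alphaT_Hk} to get the factor $H_n$. The only cosmetic difference is that you isolate the term $T=P^{*r}$ from below while the paper pads the sum from above; the inequalities are identical.
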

\begin{proof}
Fix a Shapley cost sharing game and a socially optimal strategy vector~$\hat{P}$.  We proceed to show that a global minimum $P$ of the potential function
given in Proposition~\ref{prop_phiExactPot} has cost no larger than $nH_n \cdot C(\hat{P})$. %As all local minima of the potential function are pure Nash equilibria, this implies the claimed result.
%
\iffalse
 Let $S_r$ be the set of players using $r$ in $P$.
 The cost of the Nash equilibrium is defined as
 $$\sum\limits_{r\in R}{C^r(S_r)} = \sum\limits_{r \in R}{|S_r| \cdot \frac{1}{|S_r|} C^r(S_r)}\text{.}$$
 From \eqref{eq_alpha_T} we know that
 $$\alpha_T = \frac{(|T|-1)!}{|S_r|\dots(|S_r|-|T|+1)}$$
 and
 \begin{align}
 \label{eq_oneDivBySr}
  \alpha_{S_r} = \frac{(|S_r|-1)!}{|S_r|\dots(|S_r|-|S_r|+1)} = \frac{1}{|S_r|}\text{.}
 \end{align}
 Now we add some some positive terms to the sum (since the coefficients $\alpha_T$ and cost functions are non-negative) and
use the fact
 that at most $k$ players use a resource $r$ to get
\fi
 %
 We first calculate
 \begin{align*}
   C(P) &= \sum_{r \in R} C^r(P^r)  \leq \sum_{r \in R}{|P^r| \cdot \left(\frac{1}{|P^r|} C^r(P^r) + \sum\limits_{T\subset P^r}{\alpha_T C^r(T)}\right)},
\end{align*}
since $\alpha_T = \frac{(|T| - 1)!(|P^r|-|T|)!}{|P^r|!} \geq 0$.
We use $\alpha_{P^r} = \frac{1}{|P^r|}$ to obtain
\begin{align}
 C(P)  &\leq \sum_{r \in R}|P^r| \cdot \left(\alpha_{P^r} C^r(P^r) + \sum\limits_{T\subset P^r}{\alpha_T C^r(T)}\right) = \sum_{r\in R} |P^r| \cdot \sum_{T\subseteq P^r} \alpha_T C^r(T)\notag\\
   &\leq n \cdot \sum\limits_{r\in R}{\sum_{T\subseteq P^r} \alpha_T C^r(T) } = n \cdot \Phi(P). \label{eq:bound_cp}
\end{align}
As $\Phi(P) \leq \Phi(\hat{P})$ and cost functions $C^r$ are non-decreasing, we obtain
\begin{align*}
C(P) &\leq n \cdot \Phi(\hat{P}) = n \cdot \sum_{r\in R} \sum_{T\subseteq \hat{P}^r} \alpha_T C^r(T) \leq n \cdot \sum_{r\in R} \sum_{T\subseteq \hat{P}^r} \alpha_T C^r(\hat{P}^r)\\
     &= \smash[t]{n \cdot \sum_{r\in R} C^r(\hat{P}^r) \sum_{T\subseteq \hat{P}^r} \alpha_T}.
\end{align*}
Using Lemma~\ref{lem_sum_alphaT_Hk}, we obtain $C(P) \leq nH_n \cdot C(\hat{P})$, as claimed.
\end{proof}

For $n$-player games with supermodular cost functions, we obtain a better upper bound for the price of stability of $n$. 
 
 \begin{theorem}
 \label{thm_pos_shapley_setDependent_supermod_upper}
  The price of stability for $n$-player Shapley cost sharing games with supermodular cost functions is at most $n$.
 \end{theorem}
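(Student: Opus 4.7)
The plan is to mirror the proof of Theorem~\ref{thm_pos_shapley_setDependent_arb_upper}, but to exploit supermodularity so as to replace the coarse estimate $\Phi(\hat{P}) \leq H_n \cdot C(\hat{P})$ used there by the sharper $\Phi(\hat{P}) \leq C(\hat{P})$, thereby saving the factor $H_n$.

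As the first step I would let $P$ be a global minimizer of the exact potential $\Phi$ from Proposition~\ref{prop_phiExactPot}; such a $P$ is automatically a pure Nash equilibrium. The chain of inequalities leading to~\eqref{eq:bound_cp} relies only on non-negativity of the $\alpha_T$ and of the cost functions, so it applies verbatim and yields $C(P) \leq n \cdot \Phi(P) \leq n \cdot \Phi(\hat{P})$, where the second inequality is the minimality of $P$.

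The problem is thereby reduced to showing
\[
\sum_{T \subseteq \hat{P}^r} \alpha_T \, C^r(T) \;\leq\; C^r(\hat{P}^r)
\]
for every resource $r$ (with $\alpha_T$ as in Corollary~\ref{cor:alpha_T}); summing this over $r$ gives $\Phi(\hat{P}) \leq C(\hat{P})$ and hence $C(P) \leq n \cdot C(\hat{P})$. By the identity established in the proof of Proposition~\ref{prop_phiExactPot}, the left-hand side above equals $\sum_{i \in \hat{P}^r} \phi_i^r(\hat{P}^r_{i,\pi} \cup \{i\})$ for any fixed permutation $\pi \in \Pi(\hat{P}^r)$. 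Since budget-balance gives $\sum_{i \in \hat{P}^r} \phi_i^r(\hat{P}^r) = C^r(\hat{P}^r)$, the whole argument collapses to a \emph{Shapley-value monotonicity statement}: whenever $C^r$ is supermodular and $i \in S \subseteq S' \subseteq N$, one has $\phi_i^r(S) \leq \phi_i^r(S')$.

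I expect this monotonicity lemma to be the main obstacle. My plan is to prove it by induction on $|S'| - |S|$, reducing to the single-element step $S' = S \cup \{j\}$ with $j \notin S$. Writing both Shapley values as weighted averages of marginal contributions $g(U) := C^r(U \cup \{i\}) - C^r(U)$ and splitting the terms of $\phi_i^r(S \cup \{j\})$ according to whether $j \in U$ or not, the difference $\phi_i^r(S \cup \{j\}) - \phi_i^r(S)$ can be rewritten, after routine bookkeeping on the factorial weights, as a non-negative combination of the quantities $g(U \cup \{j\}) - g(U)$ for $U \subseteq S \setminus \{i\}$. Each of these is non-negative precisely because $C^r$ is supermodular, and this completes the proof.
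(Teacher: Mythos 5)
Your argument is correct, and it reaches the paper's key intermediate claim $\Phi(\hat{P}) \leq C(\hat{P})$ by a genuinely different route. The paper re-inserts the players one at a time into $\hat{P}$ and compares increments: using the exact-potential property, the potential increase at step $i$ is $\sum_{r \in \hat{P}_i} \phi_i^r(\hat{P}^r_{(i-1)} \cup \{i\})$, and supermodularity bounds each permutation term directly by the full marginal cost $C^r(\hat{P}^r_{(i-1)} \cup \{i\}) - C^r(\hat{P}^r_{(i-1)})$, so the telescoping sums give $\Phi(\hat{P}) \leq C(\hat{P})$ almost immediately; the only fact about the Shapley value needed is ``Shapley share of the last player is at most her marginal contribution to the whole coalition.'' You instead work resource by resource, write $\sum_{T \subseteq \hat{P}^r} \alpha_T C^r(T) = \sum_{i \in \hat{P}^r} \phi_i^r(\hat{P}^r_{i,\pi} \cup \{i\})$, and compare termwise with the budget-balanced shares $\phi_i^r(\hat{P}^r)$, which requires the stronger population-monotonicity statement $\phi_i^r(S) \leq \phi_i^r(S')$ for $i \in S \subseteq S'$ and supermodular $C^r$ (Sprumont-type monotonicity for convex games). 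Your proposed proof of that lemma is sound: in the one-element step the difference $\phi_i^r(S \cup \{j\}) - \phi_i^r(S)$ equals $\sum_{U \subseteq S \setminus \{i\}} \frac{(|U|+1)!\,(|S|-|U|-1)!}{(|S|+1)!}\bigl(g(U \cup \{j\}) - g(U)\bigr)$, which is non-negative by supermodularity, and the induction on $|S'|-|S|$ closes it. The trade-off: your route needs this extra lemma and its factorial bookkeeping, but in exchange it isolates a reusable structural fact about Shapley values on supermodular cost functions and avoids the player-insertion/telescoping bookkeeping entirely; the paper's route is shorter because it never needs monotonicity of the Shapley value itself, only the comparison of one permutation average against the largest marginal contribution.
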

 \begin{proof}
Let us again fix an arbitrary game and let us denote a socially optimal strategy vector and a potential minimum by $\hat{P}$ and $P$, respectively. Using inequality \eqref{eq:bound_cp} from 
the proof of Theorem~\ref{thm_pos_shapley_setDependent_arb_upper}, we obtain $C(P) \leq n\Phi(P) \leq n\Phi(\hat{P})$, so it suffices to show that $\Phi(\hat{P}) \leq C(\hat{P})$ for supermodular cost functions.
 
To this end, we first remove all players from the game and then add them iteratively to their strategy played in $\hat{P}$. Formally, for $i \in \{1,\dots,n\}$, let
$\hat{P}_{(i)}$ denote the (partial) strategy vector in which only the players $j \in \{1,\dots,i\}$ play their strategies $\hat{P}_j$ and all other players are removed from the game (and, thus, have costs 0). 
 
For every $i$, we get
$C(\hat{P}_{(i)}) - C(\hat{P}_{(i-1)})
= \sum_{r \in \hat{P}_i} C^r(\hat{P}_{(i-1)}^r \cup \{i\}) - C^r(\hat{P}_{(i-1)}^r)$,
where $\smash{\hat{P}_{(i-1)}^r}$ denotes the set of players using resource $r$ under strategy vector $\smash{\hat{P}_{(i-1)}}$. We write this expression as
\begin{align*}
C(\hat{P}_{(i)}) - C(\hat{P}_{(i-1)})
&= \sum\nolimits_{r \in \hat{P}_i} \sum\nolimits_{\pi \in \Pi(\hat{P}_{(i-1)}^r \cup \{i\})} \frac{C^r(\hat{P}^r \cup \{i\}) - C^r(\hat{P}^r)}{|\hat{P}_{(i-1)}^r \cup \{i\}|!}.
\intertext{As $C^r$ is supermodular, the marginal cost $C^r(\hat{P}^r \cup \{i\}) - C^r(\hat{P}^r)$ of player~$i$ does not increase when considering only those players
that appear before $i$ in the permutation $\pi$. Hence,}
C(\hat{P}_{(i)}) - C(\hat{P}_{(i-1)})
  &\geq \sum\nolimits_{r \in \hat{P}_i} \sum\nolimits_{\pi\in \Pi(\hat{P}^r_{(i-1)} \cup \{i\})} \frac{C^r(\hat{P}^r_{i,\pi} \cup \{i\}) - C^r(\hat{P}^r_{i,\pi})}{|\hat{P}^r_{(i-1)} \cup \{i\}|!}  \\
  &= \sum\nolimits_{r \in \hat{P}_i} \phi_i^r(\hat{P}^r_{(i-1)} \cup \{i\})\\
  &= C_i(\hat{P}_{(i)}) - C_i(\hat{P}_{(i-1)}) = \Phi(\hat{P}_{(i)}) - \Phi(\hat{P}_{(i-1)}),
 \end{align*}
where the last equation is due to the fact that $\Phi$ is an exact potential function. Thus, in each step, the cost increases at least as much as the potential, which
finally implies $\Phi(\hat{P}) \leq C(\hat{P})$, as claimed.
 \end{proof}
 
We obtain an even better bound on the price of stability of Shapley cost sharing games with submodular costs. This result has been obtained independently by \citet{Roughgarden14}.
 
\begin{restatable}{theorem}{thmposshapleysetDependentsubmodupper}
 \label{thm_pos_shapley_setDependent_submod_upper}
  The price of stability of $n$-player Shapley cost sharing games with submodular costs is at most $H_n$.
 \end{restatable}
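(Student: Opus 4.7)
The plan is to refine the proof of Theorem~\ref{thm_pos_shapley_setDependent_arb_upper} by replacing the loose bound $C(P) \leq n\,\Phi(P)$ used there by the sharper inequality $C(P) \leq \Phi(P)$, which becomes available under submodularity. Granting this refinement, the rest of the argument is essentially what was done before: let $P$ be a global minimizer of $\Phi$ (a pure Nash equilibrium by Proposition~\ref{prop_phiExactPot}) and let $\hat{P}$ be socially optimal; then $C(P) \leq \Phi(P) \leq \Phi(\hat{P})$, and the final step $\Phi(\hat{P}) \leq H_n \cdot C(\hat{P})$ is already established in the proof of Theorem~\ref{thm_pos_shapley_setDependent_arb_upper} by combining Corollary~\ref{cor:alpha_T}, the monotonicity of each $C^r$, and Lemma~\ref{lem_sum_alphaT_Hk}.

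To prove $C(P) \leq \Phi(P)$ for submodular costs, I would mirror the player-insertion argument of Theorem~\ref{thm_pos_shapley_setDependent_supermod_upper}, but apply it to $P$ in place of $\hat{P}$ and reverse the inequality derived from the modularity assumption. Concretely, let $P_{(0)},P_{(1)},\dots,P_{(n)}=P$ be the sequence of partial strategy vectors in which player $i$ is added to the game playing $P_i$ at step $i$. Writing $C^r(P^r_{(i-1)} \cup \{i\}) - C^r(P^r_{(i-1)})$ as an average of identical terms over permutations $\pi$ of $P^r_{(i-1)} \cup \{i\}$, submodularity of $C^r$ yields $C^r(P^r_{(i-1)} \cup \{i\}) - C^r(P^r_{(i-1)}) \leq C^r(P^r_{i,\pi} \cup \{i\}) - C^r(P^r_{i,\pi})$ for every such $\pi$, because under a submodular cost the marginal contribution of $i$ at the largest prefix is the smallest. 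Averaging over permutations recovers the Shapley value $\phi_i^r(P^r_{(i-1)} \cup \{i\})$, and summing over $r \in P_i$ together with the exact-potential property of Proposition~\ref{prop_phiExactPot} gives $C(P_{(i)}) - C(P_{(i-1)}) \leq \Phi(P_{(i)}) - \Phi(P_{(i-1)})$. Telescoping over $i = 1,\dots,n$ produces $C(P) \leq \Phi(P)$, as required.

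The main obstacle is really just keeping the direction of the modularity inequality straight: the supermodular proof uses that the marginal of $i$ at the full predecessor set is at least the Shapley average, which yields $C(\hat{P}) \geq \Phi(\hat{P})$ and is then combined with the loose bound $C(P) \leq n\,\Phi(P)$; in our setting submodularity delivers the \emph{reverse} comparison, but applied to $P$ rather than $\hat{P}$ it gives the sharper $C(P) \leq \Phi(P)$. This single improvement saves the factor of $n$ relative to Theorem~\ref{thm_pos_shapley_setDependent_arb_upper}, and the harmonic factor $H_n$ arrives for exactly the same reason as there, from the identity $\sum_{T \subseteq \hat{P}^r} \alpha_T = H_{|\hat{P}^r|}$ combined with monotonicity of the costs.
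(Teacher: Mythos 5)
Your proposal is correct and follows essentially the same route as the paper's own proof: establish $C(P)\leq \Phi(P)$ for a potential minimizer $P$ via the same player-insertion argument as in the supermodular case with the inequality reversed by submodularity, and then combine with $\Phi(P)\leq \Phi(\hat{P})\leq H_n\, C(\hat{P})$, which is already contained in the proof of Theorem~\ref{thm_pos_shapley_setDependent_arb_upper}. The direction of the submodularity inequality and the telescoping step are handled exactly as in the paper, so nothing further is needed.
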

 
\begin{proof}
  The proof works very similar to the one of Theorem \ref{thm_pos_shapley_setDependent_supermod_upper} with the difference that we now use
  the submodularity of the costs to show that $C(P) \leq \Phi(P)$ for the strategy vector $P$ that minimizes $\Phi$.
  
Let $\hat{P}$ be a socially optimal strategy vector and let $P$ be a strategy vector that minimizes $\Phi$.  
 The proof of Theorem \ref{thm_pos_shapley_setDependent_arb_upper} yields
 $\Phi(P) \leq \Phi(\hat{P}) \leq H_n C(\hat{P})$, so it suffices to show $C(P) \leq \Phi(P)$.
 
Analogously to the proof of Theorem~\ref{thm_pos_shapley_setDependent_supermod_upper}, we first remove all players from the game and then insert
them iteratively in order of their index. For $i \in \{1,\dots,n\}$ let $P_{(i)}$ denote the (partial) strategy vector, in which all players
$j \in \{1,\dots,i\}$ play $P_j$. As in the proof of Theorem~\ref{thm_pos_shapley_setDependent_supermod_upper}, we obtain
\begin{align*}
C(P_{(i)}) - C(P_{(i-1)})
&= \sum_{r \in P_i} \sum_{\pi \in \Pi(P_{(i-1)}^r \cup \{i\})} \frac{C^r(P^r \cup \{i\}) - C^r(P^r)}{|\hat{P}_{(i-1)}^r \cup \{i\}|!}.
\end{align*}
We use the submodularity of the cost functions to obtain $C(P_{(i)}) - C(P_{(i-1)}) \leq \Phi(P_{(i)}) - \Phi(P_{(i-1)})$, i.e., in each iteration,
the social cost function decreases no more than the potential function. This yields $C(P) \leq \Phi(P)$.
\end{proof}

 \subsection{Price of Anarchy}

We proceed to analyze the price of anarchy of Shapley cost sharing games. It turns out that we obtain a finite bound on the price of anarchy (for a fixed number of
players) only for submodular cost functions.  

\label{sec_shapley_PoA_setDep}
\begin{theorem}%[upper bounding the price of anarchy for submodular cost functions]
\label{thm_poa_shapley_setDependent_submod_upper}
The price of anarchy of $n$-player Shapley cost sharing games with submodular costs is at most $n$.
\end{theorem}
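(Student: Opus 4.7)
The plan is to run the standard "smoothness'' deviation argument, comparing each player's equilibrium cost to the cost incurred if she unilaterally moved to her strategy in a social optimum, and then to use submodularity to collapse the resulting Shapley cost shares down to singleton costs $C^r(\{i\})$.

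Concretely, let $P$ be a pure Nash equilibrium and $\hat P$ a socially optimal strategy vector. For each player $i$, I would consider the deviation to $\hat P_i$ and observe that at every resource $r \in \hat P_i$ the set of users under $(\hat P_i, P_{-i})$ is exactly $P^r \cup \{i\}$: this holds both when $i \in P^r$ (in which case $P^r \cup \{i\} = P^r$) and when $i \notin P^r$. By the Nash condition,
\[
 C_i(P) \;\leq\; C_i(\hat P_i, P_{-i}) \;=\; \sum_{r \in \hat P_i} \phi_i^r(P^r \cup \{i\}),
\]
and summing over $i$ and swapping the order of summation gives
\[
 C(P) \;\leq\; \sum_{r \in R} \sum_{i \in \hat P^r} \phi_i^r(P^r \cup \{i\}).
\]

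Next, I would use the submodularity of $C^r$. By definition~\eqref{eq:shapley}, $\phi_i^r(P^r \cup \{i\})$ is the average over permutations of marginal contributions $C^r(T \cup \{i\}) - C^r(T)$ with $T \subseteq P^r$, and each such marginal is bounded by $C^r(\{i\}) - C^r(\emptyset) = C^r(\{i\})$ by submodularity. Combined with the monotonicity of $C^r$, which yields $C^r(\{i\}) \leq C^r(\hat P^r)$ for every $i \in \hat P^r$, and the trivial estimate $|\hat P^r| \leq n$, this gives
\[
 \sum_{i \in \hat P^r} \phi_i^r(P^r \cup \{i\}) \;\leq\; |\hat P^r|\, C^r(\hat P^r) \;\leq\; n\, C^r(\hat P^r).
\]
Summing over $r \in R$ then delivers $C(P) \leq n\, C(\hat P)$, which is the claimed bound.

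I do not anticipate a serious obstacle here: the argument is a routine smoothness-type deviation bound, and the only point that requires any care is the observation that for $r \in \hat P_i$ the post-deviation user set is always $P^r \cup \{i\}$, so the cost share faced by the deviator is uniformly $\phi_i^r(P^r \cup \{i\})$ irrespective of whether $i$ used $r$ under $P$. Everything else, in particular the submodularity reduction, is a one-line estimate.
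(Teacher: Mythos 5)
Your proposal is correct and follows essentially the same route as the paper: deviate each player to her optimal strategy $\hat P_i$, use submodularity to bound every Shapley marginal at $P^r\cup\{i\}$ by $C^r(\{i\})$, and then monotonicity to compare with $C^r(\hat P^r)$. The only difference is bookkeeping — the paper bounds each individual $C_i(P)$ by $C(\hat P)$ and sums over the $n$ players, while you sum first and extract the factor $n$ from $|\hat P^r|\le n$ per resource — which yields the same bound.
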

\begin{proof}
 Let $P$ be a pure Nash equilibrium and let $\hat{P}$ be a socially optimal strategy vector. We calculate
 \begin{align*}
C_i(P)
&\leq C_i(\hat{P}_i, P_{-i}) = \sum\nolimits_{r \in \hat{P}_i} \sum\nolimits_{\pi \in \Pi(P^r \cup \{i\})}
\frac{C^r(P^r_{i,\pi} \cup \{i\}) - C^r(P^r_{i,\pi})}{|\hat{P}^r \cup \{i\}|!},
\intertext{and, by submodularity,}
C_i(P)
&\leq \sum\nolimits_{r \in \hat{P}_i} \sum\nolimits_{\pi \in \Pi(P^r \cup \{i\})}
\!\!\frac{C^r(\{i\}) - C^r(\emptyset)}{|\hat{P}_r \cup \{i\}|!}  = \sum\nolimits_{r \in \hat{P}_i} \!C^r(\{i\}) \leq C(\hat{P}).\\[-3\baselineskip]
\end{align*}
\end{proof}

\section{General Uniform Cost Sharing}
\label{sec:uniform}

In Section \ref{sec:shapley}, we showed upper bounds on the inefficiency of pure Nash equilibria
for the Shapley cost sharing protocol. In this section, we show that these upper bounds are essentially tight, even for network games with anonymous resource costs that
depend only on the cardinality of the set of the resource's users.

\subsection{Price of Stability}
It is well known \citep{Anshelevich08,ChenRV10} that no uniform cost sharing protocol can guarantee a price of stability strictly below $H_n$ for all $n$-player Shapley
cost sharing games with submodular costs. In fact, this negative result holds even for anonymous and constant costs.
We complement this result with the observation that even for anonymous and convex costs, no uniform cost sharing protocol can guarantee a price of stability strictly below $n$.

\begin{proposition}
\label{pro_uniform_supermod_lower}
For all uniform cost sharing protocols and $\epsilon \in (0,1)$ there is a $n$-player network resource allocation model with anonymous convex cost functions such that the
price of stability is at least $n-\epsilon$.
\end{proposition}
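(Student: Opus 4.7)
The plan is to construct, for any given uniform cost sharing protocol and any $\epsilon \in (0,1)$, a network game built around a single shared anonymous convex cost function $C^f$ that vanishes for fewer than $n$ users and jumps to a positive value at $n$ users. The sharp marginal at full load is the key: no matter how the protocol distributes $C^f(n)$ among the $n$ users, budget-balance forces the smallest cost share to be at most the average $C^f(n)/n$, which opens the gap between the social optimum and the only Nash equilibrium.

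Concretely, I would take $C^f(S) = 0$ for $|S| \leq n-1$ and $C^f(S) = 1$ for $|S| = n$; this function is anonymous, non-decreasing, and convex since its marginals are $0, \ldots, 0, 1$. Uniformity fixes the cost shares $c_i := c_i^f(\{1,\ldots,n\})$ as functions of $C^f$ alone, and by budget-balance $\sum_i c_i = 1$. For a small parameter $\delta > 0$ to be chosen later, set $\alpha_i := c_i + \delta$. The underlying network has a common sink $t$, a hub $u$, and per-player source $s_i$ and intermediate node $v_i$; its edges are $g_i = (s_i,u)$ and $h_i = (v_i,t)$ with identically-zero cost, the shared edge $f = (u,t)$ carrying $C^f$, and $e_i = (s_i,v_i)$ with the linear (hence anonymous convex) cost $C^{e_i}(S) = \alpha_i \cdot |S|$. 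Player $i$ routes from $s_i$ to $t$ and has exactly two simple paths: the \emph{shared} path $\{g_i, f\}$, and the \emph{private} path $\{e_i, h_i\}$ of fixed cost $\alpha_i$, since only player $i$ can traverse $e_i$.

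The analysis then reduces to three short claims. First, the profile in which every player uses the shared path is a Nash equilibrium of social cost $C^f(n) = 1$, because each player $i$ pays $c_i < c_i + \delta = \alpha_i$. Second, and this is the main technical point, no other profile is a Nash equilibrium: whenever $k < n$ players use the shared path, $C^f(k) = 0$, so all shared-path players pay $0$ while every private-path player pays $\alpha_i > 0$; a private-path player who deviates to the shared path pays $0$ if $k + 1 \leq n - 1$, and $c_i < \alpha_i$ if $k + 1 = n$, a strict improvement in either case. Third, the social optimum is bounded above by $1/n + \delta$ using the profile in which the player $i^*$ attaining $c_{\min} := \min_i c_i$ takes her private path and all others share $f$: the cost is $C^f(n-1) + \alpha_{i^*} = c_{\min} + \delta \leq 1/n + \delta$, because the minimum of $n$ nonnegative numbers summing to $1$ is at most $1/n$.

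Combining these bounds yields a price of stability of at least $1/(1/n + \delta) = n/(1 + n\delta)$, which exceeds $n - \epsilon$ as soon as $\delta \leq \epsilon/(n(n-\epsilon))$. The hard part of the argument is the uniqueness of the Nash equilibrium; without the sharp jump in $C^f$, cheap equilibria with $k < n$ shared-path users could exist and destroy the lower bound. The identically-zero values of $C^f$ below full load are precisely engineered so that every private-path player strictly prefers to switch to the shared path whenever she is not yet the $n$-th user, which pins down the bad profile as the only Nash equilibrium regardless of how the protocol splits $C^f(n)$ among its users.
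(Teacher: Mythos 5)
Your construction is correct in spirit and rests on the same mechanism as the paper's proof: an anonymous convex cost that is $0$ below full occupancy and jumps once all $n$ players are present, budget-balance pinning some player's share of the jump at or below the average, and uniformity letting you price a private alternative strictly between that share and the jump. The bound you get, $n/(1+n\delta)\ge n-\epsilon$, suffices for the statement. The execution, however, differs from the paper's and carries one tacit assumption worth flagging: you need the cost shares to be non-negative. This enters twice. First, the edges $e_i$ with cost $\alpha_i|S|=(c_i+\delta)|S|$ are only admissible (non-negative, non-decreasing) if $c_i>-\delta$ for every $i$. Second, and more importantly, in your uniqueness argument for profiles in which $k+1\le n-1$ players would use $f$ after a deviation, you conclude that the deviator pays $0$ because $C^f(k+1)=0$; budget-balance alone only forces the shares on a zero-cost edge to \emph{sum} to zero, so without non-negativity a protocol could charge the deviator a positive amount there and cheap equilibria with $k<n$ shared users are not excluded. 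The paper's axioms list only stability, budget-balance and uniformity, so this needs either an explicit non-negativity assumption (standard in \cite{ChenRV10,Falkenhausen13}, and also derivable from the characterization of stable uniform protocols as generalized weighted Shapley protocols used later in the paper) or a construction that avoids it.

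The paper's own construction sidesteps both issues and is simpler: it puts the jump $n-\epsilon$ on a single edge $e_1$, forces $n-1$ players to use it (they have a unique strategy), and gives only one player --- one whose share $c_i^{e_1}(N)<1$, which exists by budget-balance regardless of signs --- the choice between joining $e_1$ and a private edge of cost $1$. Uniqueness of the equilibrium is then a single comparison ($c_i^{e_1}(N)<1$), no per-player tailored prices are needed, and no sign assumption on shares is used. Your version is more symmetric (every player has a choice, each alternative priced at $c_i+\delta$), which is a legitimate variant, but it pays for that symmetry with the multi-profile uniqueness argument and the hidden non-negativity requirement; either add that assumption explicitly or restrict the choice to the single small-share player as in the paper.
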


\begin{proof}
Let us fix a uniform cost sharing protocol and a player set $N = \{1,\dots,n\}$. Because of the uniformity of the cost sharing protocol, the cost
shares for a given edge are completely determined by the local information on that edge and do not depend on the structure of the
network.

 So let us consider an edge $e_1$ with the following convex cost function
\begin{align*}
C^{e_1}(P^{e_1}) = 
\begin{cases}
0, & \text{if } |P^{e_1}| \neq n,\\
n - \epsilon, & \text{if } |P^{e_1}| = n.
\end{cases}
\end{align*}

By budget-balance, $\sum_{i\in N} c_{i}^{e_1}(N) = n - \epsilon$, implying that there is a player $i \in N$ paying less than $1$ on this edge
when all users are using it.

\tikzstyle{vertex}=[circle,fill=black!25,minimum size=15pt,inner sep=0pt]
\tikzstyle{smallvertex}=[circle,fill=black,minimum size=3pt,inner sep=0pt]
\tikzstyle{selected vertex}  gebogener pfeil= [vertex, fill=red!24]
\tikzstyle{edge} = [draw,thick,->]
\tikzstyle{weight} = [font=\small]
\tikzstyle{selected edge} = [draw,line width=3pt,-,red!50]
\tikzstyle{ignored edge} = [draw,line width=4pt,-,black!20]
\begin{figure}[t]
\begin{center}
\begin{tikzpicture}[scale=0.5,auto,swap]
\node[vertex] (sA) at (0,0) {$s$};
\node[vertex] (si) at (0,-2.5) {$s_i$};
\node[smallvertex](n1) at (2,0){};
%\node[vertex](t) at (10,0){$t$};
\node[vertex] (t) at (8,0) {$t$};
\path[edge] (sA) -- node[weight,above,black]{$e_{4}$} (n1);
\path[edge] (si) -- node[weight,right,black]{$e_{3}$} (n1);
\path[edge] (n1) -- node[weight,above,black]{$e_{1}$} (t);
\draw[->, thick] (si) to[out=0,in=225] (t);
\node[](label) at (5,-1.8){$e_{2}$};
%\draw[->, thick] (unten) to[out=330,in=270] (12,0) to[out=90,in=30] (oben);
\end{tikzpicture}
\vspace{-0.5cm}
\end{center}
\caption{Cost sharing game with price of stability arbitrarily close to $n$.}
\label{fig_uniform_PoS_k}
\end{figure}
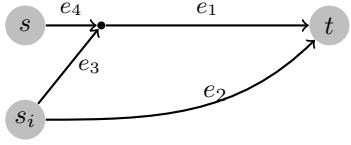

Consider the network shown in Figure~\ref{fig_uniform_PoS_k} where the cost of edge $e_2$ equals the number of its users. The other edges $e_3$
and $e_4$ are free. Player $i$ has to route from node $s_i$ to node $t$ and all the other players route from $s$ to $t$ and must use the strategy
$\{e_1,e_4\}$. If player~$i$ uses the strategy $\{e_1,e_3\}$, the costs equal $C(\{e_1,e_3\},\{e_1,e_4\},\dots,\{e_1,e_4\}) = C^{e_1}(N) = n-\epsilon$.
If she chooses $\{e_2\}$ instead, the costs are $1$. Clearly, the first strategy vector is the unique pure Nash equilibrium while the latter is the system
optimum. Thus, the price of stability is $n-\epsilon$.
\end{proof}

We proceed to show a lower bound on the price of stability of $\Theta(nH_n)$ for any uniform cost sharing protocol. To prove this, we use a result of \citet[Theorem~1]{Gopalakrishnan13}
who showed that the set of generalized weighted Shapley cost sharing protocols is the unique maximal set of uniform cost sharing protocols that is stable. To state their result,
we first recall the definition of the generalized weighted Shapley cost sharing protocols \cite{Kalai87}.
%which needs an additional assumption on the cost sharing protocols.
%For the next Theorem we assume that the cost sharing
%protocols assign the same cost sharing method to resources with the same cost function. Id est we assume
%\begin{align}
% C^r = C_{r'} \Rightarrow c_i^r = c_{i,r'} \qquad \forall i\in S, \quad \forall r,r' \in R\text{.}
% \label{additional_property}
%\end{align}

%In order to prove the Theorem, we need the following definition of a generalized weighted Shapley value, which was
%introduced by
% and adapted by Gopalakrishnan, Marden and Wierman to congestion
%games with set-dependent cost functions.

\begin{definition}[Generalized weighted Shapley cost sharing]
Let $N = \{1,\dots,n\}$ be a set of players. A tuple $w=(\lambda,\Sigma)$ with  $\lambda = (\lambda_1,\dots,\lambda_n)$ and $\Sigma = (S_1,\dots,S_m)$ is called a
\emph{weight system} if $\lambda_i > 0$  for all $i \in N$ and $\Sigma$ is a partition of $N$, i.e., $S_1 \cup  \dots \cup S_m = N$ and $S_i \cap S_j = \emptyset$ for all $i \neq j$.

Given a resource allocation model $\mathcal{M} = (N,R,\P,(C^r)_{r \in R})$ and a weight system $w = (\lambda,\Sigma)$, the \emph{generalized weighted Shapley cost sharing protocol}
assigns the cost shares $(\psi_i^r)_{i \in N, r\in R}$ defined as
 \begin{align*}
\psi_i^r(P^r) =
\begin{cases}
\sum_{T \subseteq P^r \mid i \in \overline{T}} \frac{\lambda_i}{\sum_{j \in \overline{T}} \lambda_j}\left(\sum_{U \subseteq T} (-1)^{|T|-|U|}C^r(U) \right),
& \text{ if } i \in P^r\\[10pt]
0,
& \text{ otherwise,}
\end{cases}
 \end{align*}
 where  $\overline{T} = T \cap S_k \text{ and }k = \min\{j \mid S_j\cap T\neq \emptyset \}$.
\end{definition}

The following proposition is a special case of \cite[Theorem 1]{Gopalakrishnan13}.

\begin{proposition}[\citet{Gopalakrishnan13}]
\label{pro:gopa}
For every budget-balanced, uniform and stable cost sharing protocol $\xi$ there is a weight system
 $w$ such that $\xi$ is equivalent to the generalized weighted Shapley cost sharing protocol with weight system $w$.
\end{proposition}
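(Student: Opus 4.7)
The plan is to characterize an arbitrary stable, budget-balanced, uniform protocol by analyzing its action on a canonical basis of set functions and then using stability on carefully constructed games to pin down its structural parameters. Since every non-decreasing $C^r : 2^N \to \mathbb{R}_{\geq 0}$ admits a unique decomposition $C^r = \sum_{T \subseteq N} d_T \, u_T$ in terms of the unanimity functions $u_T(S) = \mathbf{1}[T \subseteq S]$, with Harsanyi dividends $d_T = \sum_{U \subseteq T}(-1)^{|T|-|U|} C^r(U)$, my first step would be to use uniformity and budget balance to argue that the cost shares decompose linearly over this basis. Concretely, for each coalition $T$ and each user set $P^r \supseteq T$, there should exist nonnegative coefficients $\gamma_{i,T}(P^r)$ with $\sum_{i \in P^r} \gamma_{i,T}(P^r) = 1$ such that player $i$ pays $\sum_{T \subseteq P^r} \gamma_{i,T}(P^r)\, d_T$. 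Reducing the characterization to understanding the family $\{\gamma_{i,T}(P^r)\}$ is essentially free once linearity over the basis is justified.

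Next, I would invoke stability on unanimity games to determine the support of $\gamma_{\cdot,T}(P^r)$, that is, which players can pay a positive share of a given dividend. The key qualitative claim to prove is that there is a partition $\Sigma = (S_1, \dots, S_m)$ of $N$ such that whenever $P^r \supseteq T$, only the players in $\overline{T} = T \cap S_k$ (where $k$ is the smallest index with $S_k \cap T \neq \emptyset$) have $\gamma_{i,T}(P^r) > 0$. To force this structure I would build small networks where two players can each reach a common sink either through a shared unanimity-resource or through a private free edge; if both players had positive shares of the same dividend but belonged to incomparable blocks, one could construct a best-response cycle à la \citet{ChenRV10}, contradicting stability. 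Iterating the argument over pairs and then triples of players would yield the partition.

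The third step is to pin down the ratios within each block. Fixing two players $i,j$ in the same block $S_k$ and a dividend supported on $\overline{T} \ni i,j$, stability of a game where $i$ and $j$ can unilaterally exit $T$ forces $\gamma_{i,T}/\gamma_{j,T}$ to be independent of which other players from $\overline{T}$ are present; by uniformity this ratio is intrinsic to $(i,j)$. A standard transitivity argument then yields positive weights $\lambda_i$ such that $\gamma_{i,T}(P^r) = \lambda_i / \sum_{j \in \overline{T}} \lambda_j$ for all $i \in \overline{T}$. Combining this with the decomposition gives exactly the formula for $\psi_i^r(P^r)$ in the definition of the generalized weighted Shapley cost sharing protocol.

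The main obstacle is the partition step. Stability is a global existence statement, not a local inequality, so extracting an intrinsic preorder on players requires engineering strategy spaces in which every candidate strategy vector is destabilized by a best-response cycle unless the support of the cost shares respects a lexicographic order. The construction must be delicate because one has to rule out all PNE simultaneously while keeping the underlying cost functions non-decreasing and consistent with uniformity. Once this combinatorial scaffolding is in place, the remaining pieces — linearity over unanimity games, the within-block ratio argument, and verification of the closed-form expression — reduce to algebraic bookkeeping with Harsanyi dividends.
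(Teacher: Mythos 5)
The paper does not actually prove this statement: Proposition~\ref{pro:gopa} is quoted as a special case of Theorem~1 of \citet{Gopalakrishnan13}, and the authors rely on that external result without reproducing its proof. So the only question is whether your sketch would itself establish the characterization, and as written it would not. Your first step already contains a gap: linearity of the cost shares over the unanimity/Harsanyi-dividend basis does \emph{not} follow from uniformity and budget balance. Those axioms only say that the shares are \emph{some} function of the pair $(C^r,P^r)$; for instance, sharing $C^r(P^r)$ proportionally to the standalone costs $C^r(\{i\})$ is uniform and budget-balanced but is not linear in $C^r$, so the coefficients $\gamma_{i,T}(P^r)$ you posit need not exist. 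Linearity (or whatever weaker structural property one actually needs) has to be wrung out of the \emph{stability} axiom, and this is precisely where the known proof does its heavy lifting: \citet{Gopalakrishnan13} first show that any stable protocol must make every induced game a potential game, and then invoke the Hart--Mas-Colell potential and the Kalai--Samet characterization to conclude that the protocol is a generalized weighted Shapley protocol. Your plan skips this and treats the decomposition as ``essentially free,'' which is circular.

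The second genuine gap is the one you flag yourself: the partition step is not an obstacle to be smoothed over later, it \emph{is} the theorem. Saying one would ``build best-response cycles \`a la \citet{ChenRV10}'' is a plausible program, but the constant-cost constructions of that paper do not transfer directly to set-dependent costs, where a single resource carries many interacting dividends and the same weight system $(\lambda,\Sigma)$ must work simultaneously for \emph{all} non-decreasing cost functions and all strategy spaces. Without an explicit family of gadget games that destabilizes every protocol violating the lexicographic support structure, and without a proof that the within-block ratios are consistent across different resources (your transitivity step presupposes both the support structure and the linearity you have not established), the argument does not go through. In short: the approach is a reasonable outline of a direct proof that differs from the potential-game route of \citet{Gopalakrishnan13}, but the two pivotal steps are asserted rather than proven, so this is a sketch, not a proof.
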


We use this characterization to obtain the following lower bound.

\begin{theorem}
\label{thm_uniform_arbitrary_lower}
For all uniform cost sharing protocols, $\epsilon \in \left(0,\frac{1}{2}\right)$ and $n$ even, there is a network $n$-player resource allocation model with anonymous costs for
which the price of stability is equal to $\frac{n/2+1}{1+\epsilon} H_{n/2} \in \Theta(nH_n)$.
\end{theorem}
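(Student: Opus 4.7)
My plan is to first invoke Proposition~\ref{pro:gopa} to reduce the problem to generalized weighted Shapley protocols: every budget-balanced, stable, uniform cost sharing protocol is equivalent to a generalized weighted Shapley protocol for some weight system $w = (\lambda, \Sigma)$. It then suffices to exhibit, for an arbitrary such $w$ and any $\epsilon \in (0, 1/2)$, a network resource allocation model with anonymous costs whose price of stability exceeds $\tfrac{n/2+1}{1+\epsilon} H_{n/2}$, with the underlying game structure depending only on $n$ and $\epsilon$ and not on $w$.

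The construction should combine two sources of inefficiency. The first is a convex-edge gadget in the spirit of the proof of Proposition~\ref{pro_uniform_supermod_lower}: an anonymous-cost edge whose cost activates only once a threshold of $n/2+1$ users is reached. By budget-balance, the total cost shares on this edge among its users sum to the activation value, so a pigeonhole argument identifies at least one user whose share is strictly below the average---and crucially, this conclusion is independent of the underlying weight system $w$. The second source is the classical $H_{n/2}$-style construction of Anshelevich et al., in which parallel edges with anonymous costs $1, 1/2, \ldots, 2/n$ create a cascading chain of profitable single-player deviations that forces a Nash equilibrium of high social cost even though a cheap shared route exists.

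I would nest these two gadgets: the shared route of the harmonic gadget is routed through the convex-cost edge, so that a player considering leaving the shared route must compare the cost of their private harmonic alternative against a cost share determined by the convex edge. With the activation threshold calibrated to $n/2+1$ users and the harmonic scale set to $H_{n/2}$, the socially optimal profile routes all $n/2+1$ designated players through the shared convex edge at total cost roughly $1+\epsilon$. In any pure Nash equilibrium, however, the cheap player identified by the pigeonhole argument has a strictly profitable deviation to their private route, and the same pigeonhole/deviation argument then reapplies to the remaining users of the convex edge, inductively forcing all $n/2+1$ designated players onto distinct private routes for a total cost of at least $(n/2+1) H_{n/2}$ and hence the ratio $\tfrac{n/2+1}{1+\epsilon} H_{n/2}$.

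The main obstacle I expect is showing that this cascade really terminates at the advertised equilibrium uniformly over all weight systems $(\lambda, \Sigma)$. Generalized weighted Shapley protocols can concentrate all cost on a single player in the earliest partition class present, so a partial use of the convex edge might itself be a Nash equilibrium for some adversarial choice of $w$ unless the construction is calibrated very carefully---for instance, through auxiliary filler players that guarantee the threshold is precisely reached, or a slight asymmetry in the harmonic costs that rules out intermediate configurations. Making the cascade argument robust to arbitrary $w$ is exactly the combination of the Chen--Roughgarden--Valiant and Gopalakrishnan characterizations that the authors flag as their technically most challenging step.
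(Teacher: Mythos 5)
Your opening move (reducing to generalized weighted Shapley protocols via Proposition~\ref{pro:gopa}) is the same as the paper's, and you correctly sense that the Chen--Roughgarden--Valiant monotonicity result and the Gopalakrishnan characterization must be combined. However, the construction you sketch has the two gadgets in the wrong roles, the deviation logic is backwards, and as described it cannot yield $\Theta(nH_n)$. If the expensive alternatives are private parallel edges of cost $1,1/2,\dots,2/n$, then even a successful cascade forces an equilibrium of cost only $H_{n/2}$, not $(n/2+1)H_{n/2}$. The factor $n/2+1$ in the paper comes from making each expensive alternative $e_j$ a \emph{threshold} edge of cost $(n/2+1)/j$ (zero until all $n/2+1$ potential users are present) that is already occupied by $n/2$ forced players from a set $A$ with no routing choice: when the designated player from $B$ deviates onto it, the full social cost $(n/2+1)/j$ materializes while she pays only roughly a $1/(n/2+1)$ fraction of it. Moreover, your pigeonhole step identifies a player paying \emph{below} average on the shared edge, but such a player is the one with the \emph{least} incentive to leave; to drive the cascade you need the remaining user of the cheap shared edge to pay strictly more than $1/j$ there while being charged at most $1/j$ on her expensive alternative.

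Guaranteeing both of these simultaneously for an \emph{arbitrary} weight system $(\lambda,\Sigma)$ is the real content of the proof, and it is exactly what your proposal leaves open. The paper does it with two mechanisms. First, the cheap shared edge is given \emph{constant} cost $1+\epsilon$, precisely because the monotonicity lemma of Chen et al.\ holds only for constant costs (so it could not be applied to your convex shared edge); the $B$ players are then assigned to their terminal pairs iteratively, always giving the currently largest cost share on that edge the highest index, which guarantees that in any equilibrium using the shared edge, the first-chosen remaining user $i_j$ pays more than $1/j$. Second, all players are sorted by partition class and then by descending weight, and the first half is placed into the forced set $A$; on the threshold edge $e_j$ the generalized weighted Shapley formula then collapses so that only users in the earliest class pay, proportionally to their weights, whence the designated $B$ player (last in the order) pays at most the fair share $1/j$, or nothing at all if she lies outside $S_1$. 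Without these two calibrations, the adversarial intermediate equilibria you yourself worry about are not excluded, so the proposal as it stands does not establish the bound $\frac{n/2+1}{1+\epsilon}H_{n/2}$.
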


\begin{proof}
Referring to Proposition~\ref{pro:gopa}, it is sufficient to show the claimed result for an arbitrary
generalized weighted Shapley cost sharing protocol with weight system $w$ only.

  \tikzstyle{vertex}=[circle,fill=black!25,minimum size=15pt,inner sep=0pt]
  \tikzstyle{smallvertex}=[circle,fill=black,minimum size=3pt,inner sep=0pt]
  \tikzstyle{selected vertex}  gebogener pfeil= [vertex, fill=red!24]
  \tikzstyle{edge} = [draw,thick,->]
  \tikzstyle{weight} = [font=\small]
  \tikzstyle{selected edge} = [draw,line width=3pt,-,red!50]
  \tikzstyle{ignored edge} = [draw,line width=4pt,-,black!20]

  \begin{figure}[t]
  \begin{center}
  \begin{tikzpicture}[scale=0.5,auto,swap]
    \node[vertex](sA) at (0,0){$s_A$};
    \node[smallvertex](n1) at (1,0){};
    \node[smallvertex](n2) at (2,0){};
    \node[smallvertex](n3) at (3,0){};
    \node[smallvertex](n4) at (4,0){};
    \node[smallvertex](n5) at (5,0){};
    \node[smallvertex](n6) at (6,0){};
    \node[smallvertex](n7) at (7,0){};
    \node[smallvertex](n8) at (8,0){};
    \node[vertex](tA) at (10,0){$t_A$};

    \node[vertex](sB1) at (0,-2){$s_{B_1}$};
    \node[vertex](sB2) at (2,-2){$s_{B_2}$};
    \node[vertex](sB3) at (4,-2){$s_{B_3}$};
    \node[vertex](sBk) at (6,-2){$s_{B_{\tilde{n}}}$};
    \node[vertex](tB1) at (3,2){$t_{B_1}$};
    \node[vertex](tB2) at (5,2){$t_{B_2}$};
    \node[vertex](tB3) at (7,2){$t_{B_3}$};
    \node[vertex](tBk) at (9,2){$t_{B_{\tilde{n}}}$};
 
    \node[smallvertex](unten) at (3,-3){};
    \node[smallvertex](oben) at (6,3){};

%    \node[](lab) at (5,0){\dots};
%    \node[vertex](sk) at (6,0){$s_k$};

%    \node[vertex](t) at (3,4){t};
%    \node[vertex](v) at (3,-4){v};
    
%    \node[](la) at (12,0){$1+\epsilon$};
 
    \path[edge] (sA) -- (n1);
    \path[edge] (n1) -- node[weight,above,black]{$e_{1}$} (n2);
    \path[edge] (n2) -- (n3);
    \path[edge] (n3) -- node[weight,above,black]{$e_{2}$} (n4);
    \path[edge] (n4) -- (n5);
    \path[edge] (n5) -- node[weight,above,black]{$e_{3}$} (n6);
    \node[](dots1) at (6.5,0){\dots};
    \path[edge] (n7) -- node[weight,above,black]{$e_{\tilde{n}}$} (n8);
    \path[edge] (n8) -- (tA);
%Kanten unten:    
\path[edge] (sB1) -- (n1);
\path[edge] (sB2) -- (n3);
\path[edge] (sB3) -- (n5);
\path[edge] (sBk) -- (n7);
\node[](dots2) at (5.5,-1){\dots};
    
\path[edge] (sB1) -- (unten);
\path[edge] (sB2) -- (unten);
\path[edge] (sB3) -- (unten);
\path[edge] (sBk) -- (unten);
%Kanten oben:
\path[edge] (n2) -- (tB1);
\path[edge] (n4) -- (tB2);
\path[edge] (n6) -- (tB3);
\path[edge] (n8) -- (tBk);
\node[](dots3) at (7.5,1){\dots};
    
\path[edge] (oben) -- (tB1);
\path[edge] (oben) -- (tB2);
\path[edge] (oben) -- (tB3);
\path[edge] (oben) -- (tBk);

\draw[->, thick] (unten) to[out=330,in=270] (12,0) to[out=90,in=30] (oben);
\node[](label) at (13,0){$e_{\tilde{n}+1}$};

  %  \path[edge] (v) -- node[weight,above,black]{$g(x)$} (w);
  %  \path[edge] (w) -- node[weight,above,black]{1} (t);
  %  \path[edge] (s) -- node[weight,below,black,xshift=-.2cm]{2} (x);
  %  \path[edge] (x) -- node[weight,below,black,xshift=.2cm]{$g(x)$} (t);
  %  \path[edge] (w) -- node[weight,above,black,xshift=-.2cm]{0} (x);
  %  \path[edge] (s) -- node[weight,above,black,xshift=-.1cm]{$g(x)$} (u);
  %  \path[edge] (u) -- node[weight,above,black]{2} (t);
  %  \path[edge] (u) -- node[weight,below,black,xshift=.1cm]{0} (v);

  %  \draw[->, thick] (s) .. controls (2,5) and (4,5) .. node[above] {$\frac{1}{\epsilon}$} (t);
\end{tikzpicture}
~\vspace{-1cm}
\end{center}
\caption{Cost sharing game with price of stability equal to
$\left(\frac{n}{2}+1\right)H_{n/2}$}
\label{fig_kHk_unweighted_example}
\end{figure}
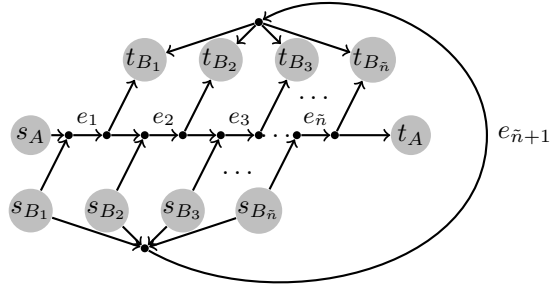

To this end, consider the network resource allocation model with anonymous costs shown in Figure \ref{fig_kHk_unweighted_example}.
   There are $n$ players that we partition into two sets $A$ and $B$ of cardinality $\tilde{n} := n/2$, i.e., $N = A \cup B$, $A\cap B = \emptyset$, and $|A|=|B|=\frac{n}{2} = \tilde{n}$. For
   edges $e_1, e_2,\dots,e_{\tilde{n}},e_{\tilde{n}+1}$ we assume the following cost functions:
\begin{align*}
C^{e_{i}}(P^{e_i}) &=
\begin{cases}
0,
& \text{if }|P^{e_i}|<\tilde{n}+1,\\
\frac{\tilde{n}+1}{i},
& \text{if }|P^{e_i}|=\tilde{n}+1,
\end{cases}
\qquad \text{ for all } i=1,\dots,\tilde{n}.\\
C^{e_{\tilde{n}+1}}(P^{e_{\tilde{n}+1}}) &=
\begin{cases}
0,
& \text{if }P^{e_{\tilde{n}+1}} = \emptyset,\\
1+\epsilon,
& \text{if }P^{e_{\tilde{n}+1}} \neq \emptyset.
\end{cases}
\end{align*}
All other edges are for free.
  
Note that the cost function of edge $e_{\tilde{n}+1}$ is constant as the cost is equal for every non-empty set of users. In previous work \cite[Lemma~5.2]{ChenRV10}
it was shown that in games where all  cost functions are constant, all uniform cost sharing protocols have to be monotone
in the sense that a player does not pay less if the cost of an edge has to be divided among less
 players, i.e.,
   $$c_{i}^{e_{\tilde{n}+1}}(P^{e_{\tilde{n}+1}}) \leq c_{i}^{e_{\tilde{n}+1}}(P^{e_{\tilde{n}+1}} \setminus \{j\}) \qquad
\forall i\neq j \in P^{e_{\tilde{n}+1}} \subseteq N.$$
This result applies to our setting as well due to the uniformity of the protocol. It is worth noting, however, that it does not necessarily hold on the other edges in
our game, since their cost is not constant.
  
  For a given generalized weighted Shapley cost sharing protocol with weight system
$w=(\lambda,\Sigma)$
 we assign the players to the sets and start nodes as follows.
 First, we sort the players ascending to their sets $S_i$ where ties are broken in favor of larger $\lambda_j$, i.e., for $a \in S_{i}$ and $b \in S_{j}$ we have the property
 \begin{align*}
  a < b \Leftrightarrow S_{i} < S_{j} \vee \left(S_{i} = S_{j} \wedge \lambda_a\geq \lambda_b \right)\text{.}
 \end{align*}. Up to
renaming we
 can assume that the obtained order is $(1,\dots,n)$.
We then put the first $n/2$ players into $A$ and the remaining $n/2$ into $B$.

All players in $A$ have start node $s_A$ and target node $t_A$. The start and target nodes of the players in $B$ are assigned in a similar fashion as in the proof of
Proposition \ref{pro_uniform_supermod_lower}. 
We first pick the player, say~$i_{\tilde{n}}$ with largest cost share on edge $e_{\tilde{n}+1}$ among the players in $B$ using $e_{\tilde{n}+1}$. We let $i_{\tilde{n}}$
route from node $s_{\tilde{n}}$ to node $t_{\tilde{n}}$. Then, we consider the remaining players
 $B \setminus \{i_{\tilde{n}}\}$ and choose a player
 \begin{align*}
  i_{\tilde{n}-1} \in \argmax\limits_{b\in B \setminus \{i_{\tilde{n}}\}}{c_{b}^{e_{\tilde{n}+1}}(B \setminus \{i_{\tilde{n}}\})}\text{.}
 \end{align*}
and let $i_{\tilde{n}-1}$ route from $s_{\tilde{n}-1}$ to $t_{\tilde{n}-1}$. We iterate this process
 until all players in $B$ are assigned to their respective start and target nodes.

 We proceed to show that in the unique pure Nash equilibrium no player uses edge $e_{\tilde{n}+1}$.
 For a contradiction, suppose $P$ is a pure Nash equilibrium and there is a non-empty set of players $\tilde{B}\subseteq B$ using edge
$e_{\tilde{n}+1}$. Let $j = \max \{k \in \{1,\dots,\tilde{n}\} : i_k \in \tilde{B}\}$ be the
 player chosen first among the players in $\tilde{B}$ in the assignment procedure described above.
 First, we show that $i_j$ pays more than $1/j$ for edge $e_{\tilde{n}+1}$ in $P$. To this end, note that
$c_{i_j}^{e_{\tilde{n}+1}}(\tilde{B}) \geq c_{i_j}^{e_{\tilde{n}+1}}(\cup_{k=1}^{j} \{i_k\}) \geq (1+\epsilon)/j > 1/j$,
 where for the first inequality we used the monotonicity of the cost shares, and for the second inequality, we used that player $i_j$ pays most among the players in $\cup_{k=1}^{j} \{i_k\}$ by construction.
 
We proceed to argue that player~$i_j \in B$ would not pay more than $1/j$ on edge $e_j$ when deviating to her other path, which is then a contradiction to
$P$ being a pure Nash equilibrium with a non-empty set of players using edge $e_{\tilde{n}+1}$.
 We calculate the cost share of player $i_j$ on $e_j$ by the definition of the generalized weighted Shapley cost sharing
protocol.
 \begin{align*}
  c_{i_j}^{e_j}(A \cup \{i_j\}) &= \sum\nolimits_{T\subseteq A \cup \{i_j\}:i_j \in \overline{T}}{\frac{\lambda_{i_j}}{\sum\nolimits_{k \in \overline{T}}{\lambda_k}}\left(\sum\nolimits_{R \subseteq T}{(-1)^{|T|-|R|}C^{e_j}(R)}\right)}\\
     &= \sum\nolimits_{T= A \cup \{i_j\}:i_j \in \overline{T}}{\frac{\lambda_{i_j}}{\sum\nolimits_{k \in
\overline{T}}{\lambda_k}}\left(C^{e_j}(T)\right)},\\
\intertext{where we used that $C^{e_i}(R)=0$ for $R \neq A \cup \{i_j\}$. We obtain}
%     &= \left\{\begin{array}{ll}
%               \frac{\lambda_i}{\sum\limits_{j\in {A \cup \{i\}}}{\lambda_j}}\left(C_{e_i}(A \cup \{i\})\right)
% & \text{if $i \in S_k$ with $k = \min\{j|S_j\cap A \cup \{i\}\neq \emptyset \}$}\\
%	      0 & \text{otherwise}
%               \end{array}\right.\\
  c_{i_j}^{e_j}(A \cup \{i_j\}) &=
\begin{cases}   
\frac{\lambda_{i_j}}{\sum_{k \in {S_1}}\lambda_k} \left(C^{e_j}(A \cup \{i_j\})\right),
&\text{if $i_j \in S_1$},\\
0,
& \text{otherwise.}
\end{cases}
\end{align*}
In particular, only the players in $S_1$ have to pay for edge $e_j$. If $i_j  \notin S_1$ we immediately obtain 
$c_{i_j}^{e_j}(A \cup \{i_j\}) = 0 \leq 1/j\text{.}$ If, on the other hand, player $i_j \in S_1$ we know from the construction of $A$ and $B$ that $A \subseteq S_1$.
Player $i_j$ has the smallest weight of all these players by construction, so $i_j$ pays not more than all the other players. We then obtain
$\smash{c_{i_j}^{e_j}(A \cup \{i_j\}) \leq \frac{(\tilde{n}+1)/j}{\tilde{n}+1} = 1/j}$.

 We have shown that in no pure Nash equilibrium there is
 a player using edge $e_{\tilde{n}+1}$.
 This implies that in all pure Nash equilibria, each player $i_j$ uses edge $e_{j}$, so the players in $N$ pay
in total $\sum_{j=1}^{\tilde{n}}{C^{e_j}(A \cup i_j)} = \sum_{j=1}^{\tilde{n}}{\frac{\tilde{n}+1}{j}} = (\tilde{n}+1)H_{\tilde{n}}$.
The price of stability thus amounts to $\frac{1}{1+\epsilon}(n/2+1)H_{n/2}$, as claimed.
\end{proof}

\subsection{Price of Anarchy}

Von Falkenhausen and Harks~\cite{Falkenhausen13} showed that any uniform cost sharing protocol
leads to unbounded price of anarchy for cost sharing games with three (weighted) players and three parallel arcs.

We complement their result by showing that also for anonymous costs but more complicated networks, no constant price of anarchy can be obtained.
 
\begin{restatable}{theorem}{thmUniformPoaSupermodLower}
\label{thm_uniform_poa_supermod_lower}
For all uniform cost sharing protocols there is a $2$-player network resource allocation model with anonymous convex cost functions such that the price of anarchy is unbounded.
\end{restatable}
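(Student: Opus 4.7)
My plan is to generalize the construction of Proposition~\ref{pro_uniform_supermod_lower} by exploiting two ingredients in concert: first, the freedom to choose a worst-case anonymous convex cost function once the protocol is fixed, and second, a network topology that removes all alternatives from one player while giving the other player an alternative whose cost is tuned to equal its cost share on the forced edge.

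Concretely, I would fix an arbitrary uniform cost sharing protocol $\xi$ and invoke Proposition~\ref{pro:gopa} to assume $\xi$ is a generalized weighted Shapley protocol with some weight system $w=(\lambda,\Sigma)$. Consider the anonymous convex cost function $C$ with $C(\emptyset)=0$, $C(1)=0$, and $C(2)=M$, for a parameter $M$ to be made large. By budget-balance $\xi_1(C,\{1,2\}) + \xi_2(C,\{1,2\}) = M$, so some player---call it $A$---has share at most $M/2$; write this share as $\alpha M$ with $\alpha \in [0,1/2]$. Now build a network in the spirit of Figure~\ref{fig_uniform_PoS_k}: player $B$ has a single path that forces it to traverse an edge $e$ with cost $C^e = C$, while player $A$ has two options, namely to join $e$ or to take a disjoint alternative path of total cost $\alpha M + \varepsilon$. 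Then ``both players on $e$'' is a pure Nash equilibrium---player $A$ strictly prefers paying $\alpha M$ over $\alpha M + \varepsilon$, and player $B$ has no choice---with social cost $M$. The socially optimal profile routes $A$ on the alternative and leaves $B$ alone on $e$ (cost $C^e(1) = 0$), yielding social cost $\alpha M + \varepsilon$. Hence $\mathrm{PoA} \ge M/(\alpha M + \varepsilon)$.

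To make this ratio unbounded as a function of the size of the construction, I would next exploit the dependence of the share $\alpha$ on the underlying cost function through the Harsanyi-dividend formula of generalized weighted Shapley: by perturbing $C$ (for instance, taking $C(1)=a$ small and letting $a\to 0$, or scaling the weights across a family of anonymous convex cost functions $C$) one can often drive $\alpha$ to zero for a given $w$, in which case the single-gadget ratio already tends to $\infty$. For weight systems where this direct trick does not suffice, the plan is to place several such gadgets in parallel using player-specific source-sink pairs for $B$---exactly the layering technique used in Theorem~\ref{thm_uniform_arbitrary_lower}---so that the inefficiency accumulates across edges while the social optimum remains dominated by a single cheap alternative for $A$.

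The main obstacle is the case of a \emph{symmetric} weight system (most notably the unweighted Shapley protocol), where $\alpha = 1/2$ is forced on any anonymous $C$ and the basic single-gadget ratio is capped at $2$. Handling this case is the technical heart of the argument: one has to design a network in which $A$'s only alternative to $e$ incidentally shares a distinct convex edge $f$ with $B$, so that $A$'s deviation cost is inflated by a Shapley-type share on $f$ (keeping the profile an equilibrium) while the socially optimal routing puts $A$ alone on $f$ and thereby pays only $C^f(1)$. Balancing the cost functions on $e$ and $f$ against the share functional of $\xi$ is what I expect to be the delicate step, and it is here that the combination of Proposition~\ref{pro:gopa} with the monotonicity of shares for constant costs (as exploited already in the proof of Theorem~\ref{thm_uniform_arbitrary_lower}) should yield the arbitrarily large ratio.
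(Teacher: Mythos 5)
There is a genuine gap: your argument only covers protocols that assign one of the two players a vanishing (or, more precisely, bounded) share of the joint cost, and you explicitly defer the complementary case---symmetric weight systems such as unweighted Shapley, where anonymity plus budget-balance forces $\alpha=1/2$ on every anonymous cost function, so no perturbation of $C$ or of the weights can drive $\alpha\to 0$---to an unconstructed gadget. That deferred case is not a corner case; it is exactly the case the theorem is about (Shapley itself must have unbounded price of anarchy here), and your sketch for it (an alternative path for $A$ sharing a convex edge $f$ with $B$, with the optimum putting $A$ alone on $f$) is internally in tension: if $B$ is forced onto $f$ the optimum cannot have $A$ alone on $f$, and if $B$ is not forced onto $f$ you must re-verify $B$'s equilibrium condition, which you do not do. The PoS-style layering of Theorem~\ref{thm_uniform_arbitrary_lower} does not rescue this either, since accumulating many gadgets bounds the \emph{best} equilibrium, not the worst, and gives no obvious handle on the price of anarchy for a symmetric protocol.

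The paper closes precisely this gap with a dichotomy that avoids Proposition~\ref{pro:gopa} altogether: fix the two-edge anonymous convex cost $0,1,q$ and ask whether $\min_{i,j}c_i^{e_j}(\{1,2\})$ is unbounded or bounded as $q\to\infty$. If it is bounded by some $z$ (your regime, but in absolute rather than fractional terms), the two-node network of Figure~\ref{fig_inf_secondCase} gives ratio $q/(z+1)$, which is unbounded. If it is unbounded (the symmetric/Shapley regime you could not handle), the network of Figure~\ref{fig_inf_firstCase} is used: player $2$'s equilibrium path occupies \emph{both} cheap edges $e_1,e_2$, so any deviation of player $1$ from the expensive direct edge (cost $4a$) forces her to share an edge on which her share is at least $4a$, making the direct edge an equilibrium choice, while the optimum splits the players across the two cheap paths at total cost $4$. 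To repair your proof you would need to supply an explicit construction of this second type, i.e., a network in which the ``blocking'' player occupies all of the deviator's cheap alternatives and the optimum uses no shared edge at all; without it the statement is not proved for the protocols where it matters most.
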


\tikzstyle{vertex}=[circle,fill=black!25,minimum size=15pt,inner sep=0pt]
\tikzstyle{smallvertex}=[circle,fill=black,minimum size=3pt,inner sep=0pt]
\tikzstyle{selected vertex}  gebogener pfeil= [vertex, fill=red!24]
\tikzstyle{edge} = [draw,thick,->]
\tikzstyle{weight} = [font=\small]
\tikzstyle{selected edge} = [draw,line width=3pt,-,red!50]
\tikzstyle{ignored edge} = [draw,line width=4pt,-,black!20]
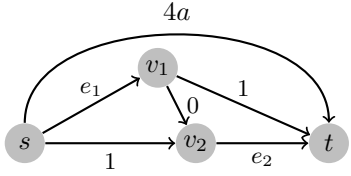
\begin{figure}[t]
\begin{center}
\begin{tikzpicture}[scale=0.5,auto,swap]
    \node[vertex] (s) at (0,0) {$s$};
    \node[vertex](n1) at (3.5,2){$v_1$};    
    \node[vertex](n2) at (4.5,0){$v_2$};
    \node[vertex] (t) at (8,0) {$t$};
    \path[edge] (s) -- node[weight,above,black]{$e_{1}$} (n1);
    \path[edge] (n1) -- node[weight,right,black]{$0$} (n2);
    \path[edge] (s) -- node[weight,below,black]{$1$} (n2);
    \path[edge] (n1) -- node[weight,above,black]{$1$} (t);
    \path[edge] (n2) -- node[weight,below,black]{$e_{2}$} (t);
    \draw[->, thick] (s) to[out=90,in=90] (t);
    \node[](label) at (4,3.5){$4a$};
 \end{tikzpicture}
 \caption{Cost sharing game used in case 1 of the proof of Theorem
\ref{thm_uniform_poa_supermod_lower}.}
 \label{fig_inf_firstCase}
 \end{center}
  \end{figure}

\begin{proof}
 Let us fix a uniform cost sharing protocol and let $a \geq 1$ be arbitrary. In the following, we will show that there is a $2$-player network resource allocation model with anonymous
 convex cost such that the resulting price of anarchy is at least $a$. As $a$ was chosen arbitrarily, this implies the claimed result.
 
First, consider the two-player game shown in Figure \ref{fig_inf_firstCase}.
 For arbitrary $q \geq 2$, we assume the following supermodular cost functions for the edges $e_1$ and $e_2$.
 \begin{align*}
   C^{e_1}(\emptyset) &= 0 & C^{e_2}(\emptyset) &= 0\\
   C^{e_1}(\{1\}) &= 1 & C^{e_2}(\{1\}) &= 1\\
   C^{e_1}(\{2\}) &= 1 & C^{e_2}(\{2\}) &= 1\\
   C^{e_1}(\{1,2\}) &= q & C^{e_2}(\{1,2\}) &= q\text{.}
  \end{align*}
  The per-player costs of all the other edges are given in Figure \ref{fig_inf_firstCase}. Both players have to
  route from $s$ to $t$. Observe that $\min_{i,j\in \{1,2\}}{\left\{c_{i}^{e_j}\left(\{1,2\}\right)\right\}}$ may depend on $q$ as the costs of edges $e_1$ and $e_2$ when used by both players are $q$.
  We distinguish the following two cases.

\textbf{First case:} $\min_{i,j\in \{1,2\}}{\left\{c_{i}^{e_j}\left(\{1,2\}\right)\right\}}$ is unbounded for $q~\in~[2,\infty)$. Fix $q$ such that
  $$\min\limits_{i,j\in \{1,2\}}{\left\{c_{i}^{e_j}\left(\{1,2\}\right)\right\}} \geq 4a > 2\text{.}$$
 This implies that if both players choose the direct edge, one of them has to pay at least $2$ for this edge. Up to labeling let this be player $2$.
  In a socially optimal strategy vector, one player routes from $s$ via $v_1$ to $t$ and the other player from $s$ via $v_2$ to $t$. Because of budget-balance each player pays $2$ for her strategy, so
  this strategy vector has a total cost of $4$.
  
  However, player~$1$ using the direct edge from $s$ to $t$ and player~$2$ using
  the path $s \rightarrow v_1 \rightarrow v_2 \rightarrow t$ is a pure Nash equilibrium.
  Player $1$ cannot improve her strategy because she can not pay less on any of the paths
  containing edge $e_1$ or edge $e_2$, because of the choice of the cost function of the direct edge.
  Player~$2$ would not deviate to the direct edge by construction and not to the other edges due to
  budget-balance.
  So this game has a price of anarchy of at least
  $$\frac{\min\limits_{i,j\in \{1,2\}}{\left\{c_{i}^{e_j}\left(\{1,2\}\right)\right\} + 2}}{4} \geq \frac{4a+2}{4} \geq
a\text{.}$$
This finishes the proof for the first case.

\textbf{Second case:} $\min_{i,j\in \{1,2\}}\{c_{i}^{e_j}(\{1,2\})\}$ is bounded for all $q \in [2,\infty)$. Let $z \in \mathbb{N}$ be such that 
\begin{align}\label{eq:bounded_min}
\min_{i,j\in \{1,2\}}\left\{c_{i}^{e_j}\left(\{1,2\}\right)\right\} \leq z \qquad \text{ for all } q \in [2,\infty).
\end{align}
Let us fix $q\geq \min\{a(z+1),2\}$. Using \eqref{eq:bounded_min}, there is a
player
  $i \in \{1,2\}$ and an edge $e \in \{e_1, e_2\}$ with
  $$c_{i}^{e}\left(\{1,2\}\right) \leq z\text{.}$$
  Up to labeling let this be player $1$ at edge $e_1$. Now consider the network shown in Figure \ref{fig_inf_secondCase}.
  Player $2$ has to route from $s_{2}$ to $t$, so she has to use edge $e_1$. Player~$1$ starts at $s_1$, so she
  can choose the direct edge from $s_1$ to $t$ or path $s_1 \rightarrow s_2 \rightarrow t$ using $e_1$.
  In a socially optimal strategy vector she would choose the direct edge and we have
  \begin{align*}
   C\left((s_1,t),e_1\right) = C^{(s_1,t)}(\{1\}) + C^{e_1}(\{2\}) = z + 1\text{,}
  \end{align*}
  whereas we know by construction of the second case that $P=(e_1,e_1)$ is a pure Nash equilibrium.
  So we immediately get a price of anarchy of
  $$\frac{C(e_1,e_1)}{C\big((s_1,t),e_1\big)} = \frac{C^{e_1}(\{1,2\})}{C^{(s_1,t)}(\{1\}) + C^{e_1}(\{2\})} = \frac{q}{z+1}
\geq \frac{a(z+1)}{z+1} \geq a\text{.}$$
This finishes the proof of the second case.

We conclude that there is a game with anonymous convex costs and arbitrarily large price.
  \end{proof}
  
   \tikzstyle{vertex}=[circle,fill=black!25,minimum size=15pt,inner sep=0pt]
  \tikzstyle{smallvertex}=[circle,fill=black,minimum size=3pt,inner sep=0pt]
  \tikzstyle{selected vertex}  gebogener pfeil= [vertex, fill=red!24]
  \tikzstyle{edge} = [draw,thick,->]
  \tikzstyle{weight} = [font=\small]
  \tikzstyle{selected edge} = [draw,line width=3pt,-,red!50]
  \tikzstyle{ignored edge} = [draw,line width=4pt,-,black!20]
 \begin{figure}[t]
 \begin{center}
  \begin{tikzpicture}[scale=0.5,auto,swap]
    \node[vertex] (s2) at (0,0) {$s_2$};
    \node[vertex](s1) at (0,-3){$s_1$};    
    \node[vertex](t) at (8,0){$t$};
    \path[edge] (s2) -- node[weight,above,black]{$e_{1}$} (t);
    \path[edge] (s1) -- node[weight,left,black]{$0$} (s2);
    \draw[->, thick] (s1) to[out=0,in=225] (t);
    \node[](label) at (5,-2){$z$};
 \end{tikzpicture}
 \end{center}
 \caption{Cost sharing game used in case 2 of the proof of Theorem
\ref{thm_uniform_poa_supermod_lower}.}
 \label{fig_inf_secondCase}
  \end{figure}
  
\newpage  
\bibliographystyle{plainnat}
\bibliography{master-bib}

\begin{thebibliography}{19}
\providecommand{\natexlab}[1]{#1}
\providecommand{\url}[1]{\texttt{#1}}
\expandafter\ifx\csname urlstyle\endcsname\relax
  \providecommand{\doi}[1]{doi: #1}\else
  \providecommand{\doi}{doi: \begingroup \urlstyle{rm}\Url}\fi

\bibitem[Anshelevich et~al.(2008)Anshelevich, Dasgupta, Kleinberg, Tardos,
  Wexler, and Roughgarden]{Anshelevich08}
Elliot Anshelevich, Anirban Dasgupta, Jon Kleinberg, {\'E}va Tardos, Tim
  Wexler, and Tim Roughgarden.
\newblock The price of stability for network design with fair cost allocation.
\newblock \emph{SIAM J. Comput.}, 38\penalty0 (4):\penalty0 1602--1623, 2008.

\bibitem[Chen et~al.(2010)Chen, Roughgarden, and Valiant]{ChenRV10}
Ho-Lin Chen, Tim Roughgarden, and Gregory Valiant.
\newblock Designing network protocols for good equilibria.
\newblock \emph{SIAM J. Comput.}, 39\penalty0 (5):\penalty0 1799--1832, 2010.

\bibitem[Fabrikant et~al.(2004)Fabrikant, Papadimitriou, and
  Talwar]{Fabrikant04}
Alex Fabrikant, Christos Papadimitriou, and Kunal Talwar.
\newblock The complexity of pure {N}ash equilibria.
\newblock In L{\'a}szl{\'o} Babai, editor, \emph{Proc. 36th Annual ACM Sympos.
  Theory Comput.}, pages 604--612, 2004.

\bibitem[Fotakis et~al.(2005)Fotakis, Kontogiannis, and Spirakis]{Fotakis05}
Dimitris Fotakis, Spyros Kontogiannis, and Paul~G. Spirakis.
\newblock Selfish unsplittable flows.
\newblock \emph{Theoret. Comput. Sci.}, 348\penalty0 (2-3):\penalty0 226--239,
  2005.

\bibitem[Gkatzelis et~al.(2014)Gkatzelis, Kollias, and
  Roughgarden]{Gkatzelis14}
Vasilis Gkatzelis, Konstantinos Kollias, and Tim Roughgarden.
\newblock Optimal cost-sharing in weighted congestion games.
\newblock In Tie-Yan Liu, Qi~Qi, and Yinyu Ye, editors, \emph{Proc. 10th
  Internat. Conference on Web and Internet Econom.}, volume 8877 of
  \emph{LNCS}, pages 72--88, 2014.

\bibitem[Goemans et~al.(2005)Goemans, Mirrokni, and Vetta]{Goemans05}
Michel~X. Goemans, Vahab~S. Mirrokni, and Adrian Vetta.
\newblock Sink equilibria and convergence.
\newblock In \emph{Proc. 46th Annual IEEE Sympos. Foundations Comput. Sci.},
  pages 142--154, 2005.

\bibitem[Gopalakrishnan et~al.(2013)Gopalakrishnan, Marden, and
  Wierman]{Gopalakrishnan13}
Ragavendran Gopalakrishnan, Jason~R. Marden, and Adam Wierman.
\newblock Potential games are necessary to ensure pure nash equilibria in cost
  sharing games.
\newblock In \emph{Proc. 14th ACM Conf. Electronic Commerce}, pages 563--564,
  2013.

\bibitem[Hart and Mas-Colell(1989)]{Hart89}
Sergiu Hart and Andreu Mas-Colell.
\newblock Potential, value, and consistency.
\newblock \emph{Econometrica}, 57\penalty0 (3):\penalty0 589--614, 1989.

\bibitem[Kalai and Samet(1987)]{Kalai87}
Ehud Kalai and Dov Samet.
\newblock On weighted shapley values.
\newblock \emph{Internat. J. Game Theory}, 16\penalty0 (3):\penalty0 205--222,
  1987.

\bibitem[Kollias and Roughgarden(2011)]{Kollias11}
Konstantinos Kollias and Tim Roughgarden.
\newblock Restoring pure equilibria to weighted congestion games.
\newblock In Luca Aceto, Monika Henzinger, and Ji\v{r}\'{i} Sgall, editors,
  \emph{Proc. 38th Internat. Colloquium on Automata, Languages and
  Programming}, volume 6756 of \emph{LNCS}, pages 539--551, 2011.

\bibitem[Koutsoupias and Papadimitriou(1999)]{Koutsoupias99}
Elias Koutsoupias and Christos~H. Papadimitriou.
\newblock Worst-case equilibria.
\newblock In Christoph Meinel and Sophie Tison, editors, \emph{Proc. 16th
  Internat. Sympos. Theoretical Aspects of Comput. Sci.}, volume 1563 of
  \emph{LNCS}, pages 404--413, 1999.

\bibitem[Libman and Orda(2001)]{Libman01}
Lavy Libman and Ariel Orda.
\newblock Atomic resource sharing in noncooperative networks.
\newblock \emph{Telecommun. Syst.}, 17\penalty0 (4):\penalty0 385--409, 2001.

\bibitem[Milchtaich(1996)]{Milchtaich96}
Igal Milchtaich.
\newblock Congestion games with player-specific payoff functions.
\newblock \emph{Games Econom. Behav.}, 13\penalty0 (1):\penalty0 111--124,
  1996.

\bibitem[Papadimitriou(2001)]{Papadimitriou01}
Christos~H. Papadimitriou.
\newblock Alogithms, games, and the {Internet}.
\newblock In \emph{Proc. 33th Annual ACM Sympos. Theory Comput.}, pages
  749--753, 2001.

\bibitem[Rosenthal(1973)]{Rosenthal73a}
Robert~W. Rosenthal.
\newblock A class of games possessing pure-strategy {N}ash equilibria.
\newblock \emph{Internat. J. Game Theory}, 2\penalty0 (1):\penalty0 65--67,
  1973.

\bibitem[Roughgarden and Schrijvers(2014)]{Roughgarden14}
Tim Roughgarden and Okke Schrijvers.
\newblock Network cost-sharing without anonymity.
\newblock In Ron Lavi, editor, \emph{Proc. 8th Internat. Sympos. Algorithmic
  Game Theory}, volume 8768 of \emph{LNCS}, pages 134--145, 2014.

\bibitem[Schulz and Stier-Moses(2003)]{Schulz03}
Andreas~S. Schulz and Nicol{\'a}s~E. Stier-Moses.
\newblock On the performance of user equilibria in traffic networks.
\newblock In \emph{Proc. 14th Annual ACM-SIAM Sympos. on Discrete Algorithms},
  pages 86--87. Society for Industrial and Applied Mathematics, 2003.

\bibitem[Shapley(1953)]{Shapley53}
Lloyd~S. Shapley.
\newblock A value for $n$-person games.
\newblock In Harold W.~Kuhn und Albert W.~Tucker, editor, \emph{Contributions
  to the Theory of Games}, volume~2, pages 307--317. Princeton University
  Press, 1953.

\bibitem[von Falkenhausen and Harks(2013)]{Falkenhausen13}
Philipp von Falkenhausen and Tobias Harks.
\newblock Optimal cost sharing for resource selection games.
\newblock \emph{Math. Oper. Res.}, 38\penalty0 (1):\penalty0 184--208, 2013.

\end{thebibliography}

\end{document}